\newtheorem{problem}{Problem}
\newtheorem{theorem}{Theorem}
\newtheorem{lemma}{Lemma}
\newtheorem{algorithm}{Algorithm}
\newtheorem{proposition}{Proposition}
\begin{document}
\title{\LARGE \bf Data-Driven Robust Taxi Dispatch under Demand Uncertainties}

\author{Fei Miao,\ \and Shuo Han,\ \and Shan Lin,\ \and Qian Wang,\ \and John Stankovic,\ \and Abdeltawab Hendawi,\ \and Desheng Zhang,\ \and Tian He,\ \and George J. Pappas
\thanks{This work was supported by NSF CPS-1239152, Project Title: CPS: Synergy: Collaborative Research: Multiple-Level Predictive Control of Mobile Cyber Physical Systems with Correlated Context, NSF (CNS-1239224) and TerraSwarm. 
Part of the results of this work appeared at the 54th IEEE Conference on Decision and Control, Osaka, Japan, December 2015~\cite{taxi_cdc2015}. }
\thanks{F.~Miao is with the Department of Computer Science and Engineering, University of Connecticut, Storrs, CT, USA 06269. Email: fei.miao@uconn.edu. S.~Han is with the Department of Electrical and Computer Engineering, University of Illinois at Chicago, Chicago, Illinois, USA 60607. Email: hanshuo@uic.edu. S.~Lin is with Department of Electrical and Computer Engineering, Stony Brook University, Long Island, NY, USA 11794. Email: shan.x.lin@stonybrook.edu. Q.~Wang is with ADVANCE.AI, Beijing, China. Email: qianwangchina@gmail.com. J.~Stankovic and A.~Hendawi are with the Department of Computer Science, University of Virginia, Charlottesville, VA,  USA, 22904. Email: \{stankovic, hendawi\}@virginia.edu.  D.~Zhang is with the Department of Computer Science at Rutgers University, NJ, USA 08854. Email: desheng.zhang@cs.rutgers.edu. T.~He is with Department of Computer Science and Engineering, University of Minnesota, Minneapolis, MN 55455, USA. Email: tianhe@cs.umn.edu. G.~J.~Pappas is with the Department of Electrical and Systems Engineering, University of Pennsylvania, Philadelphia, PA, USA 19014. Email: pappasg@seas.upenn.edu.}
}

\maketitle

\begin{abstract}
In modern taxi networks, large amounts of taxi occupancy status and location data are collected from networked in-vehicle sensors in real-time. They provide knowledge of system models on passenger demand and mobility patterns for efficient taxi dispatch and coordination strategies. Such approaches face new challenges: how to deal with uncertainties of predicted customer demand  while fulfilling the system's performance requirements, including minimizing taxis' total idle mileage and maintaining service fairness across the whole city; how to formulate a computationally tractable problem. To address this problem, we develop a data-driven robust taxi dispatch framework to consider spatial-temporally correlated demand uncertainties. The robust vehicle dispatch problem we formulate is concave in the uncertain demand and convex in the decision variables. Uncertainty sets of random demand vectors are constructed from data based on theories in hypothesis testing, and provide a desired probabilistic guarantee level for the performance of robust taxi dispatch solutions. We prove equivalent computationally tractable forms of the robust dispatch problem using the minimax theorem and strong duality. Evaluations on four years of taxi trip data for New York City show that by selecting a probabilistic guarantee level at 75\%, the average demand-supply ratio error is reduced by 31.7\%, and the average total idle driving distance is reduced by 10.13\% or about 20 million miles annually, compared with non-robust dispatch solutions.
\end{abstract}

\section{Introduction}
\label{sec:intro}
Modern transportation systems are equipped with various sensing technologies for passenger and vehicle tracking, such as radio-frequency identification (RFID) and global positioning system (GPS). Sensing data collected from transportation systems provides us opportunities for understanding spatial-temporal patterns of passenger demand. Methods of predicting taxi-passenger demand~\cite{Desheng_Dmodel, predict_demand}, travel time~\cite{time_bayen, velocity_bayen, rt_trip} and traveling speed~\cite{limit_tt, Jaillet} according to traffic monitoring data have been developed. 

Based on such rich spatial-temporal information about passenger mobility patterns and demand, many control and coordination solutions have been designed for intelligent transportation systems. Robotic mobility-on-demand systems that minimize the number of re-balancing trips~\cite{mod_balance, mod}, and smart parking systems that allocates resource based on a driver's cost function~\cite{smart_christos} have been proposed. Dispatch algorithms that aim to minimize customers' waiting time~\cite{congest_disp, Lee_review} or to reduce cruising mile~\cite{Desheng_mile} have been developed. In our previous work~\cite{Miao_tase16, taxi_Feiiccps15}, we design a receding horizon control (RHC) framework that incorporates predicted demand model and real-time sensing data. Considering future demand when making the current dispatch decisions helps to reduce autonomous vehicle balancing costs~\cite{mod} and taxis' total idle distance~\cite{taxi_Feiiccps15, Miao_tase16}. Strategies for resource allocation depend on the model of demand in general, and the knowledge and assumptions about the demand affect the performance of the supply-providing approaches~\cite{Arzen_dra},~\cite{Quijano_ra}. These works rely on precise passenger-demand models to make dispatch decisions.

However, passenger-demand models have their intrinsic model uncertainties that result from many factors, such as weather, passenger working schedule, and city events etc. Algorithms that do not consider these uncertainties can lead to inefficient dispatch services, resulting in imbalanced workloads, and increased taxi idle mileage. Although robust optimization aims to minimize the worst-case cost under all possible random parameters, it sacrifices average system performances~\cite{robust_allocate}. For a taxi dispatch system, it is essential to address the trade-off between worst-case and the average dispatch costs under uncertain demand. A promising yet challenging approach is a robust dispatch framework with an uncertain demand model, called an uncertainty set, that captures spatial-temporal correlations of demand uncertainties and the robust optimal solution under this set provides a probabilistic guarantee for the dispatch cost (as defined in problem~\eqref{obj_2}). 

In this work, we consider two aspects of a robust vehicle dispatch model given a taxi-operational records dataset: (1) how to formulate a robust resource allocation problem that dispatches vacant vehicles towards predicted uncertain demand, and (2) how to construct spatial-temporally correlated uncertain demand sets for this robust resource allocation problem without sacrificing too much average performance of the system. We first develop the objective and constraints of a robust dispatch problem considering spatial-temporally correlated demand uncertainties. The objective of a system-level optimal dispatch solution is balancing workload of taxis in each region of the entire city with minimum total current and expected future idle cruising distance. We define an approximation of the balanced vehicle objective in this work, such that the robust vehicle dispatch problem is concave of the uncertain demand and convex of the decision variables. We then design a data-driven algorithm for constructing uncertainty demand sets without assumptions about the true model of the demand vector. The constructing algorithm is based on hypothesis testing theories~\cite{datad_robust}~\cite{N_1970}~\cite{SC_2003}, however, how to apply these theories for spatial-temporally correlated transportation data and uncertainty sets of a robust vehicle resource allocation problem have not been explored before. To the best of our knowledge, this is the first work to design a robust vehicle dispatch model that provides a desired probabilistic guarantee using predictable and realistic demand uncertainty sets.

Furthermore, we explicitly design an algorithm to build demand uncertainty set from data according to different probabilistic guarantee level for the cost. With two types of uncertainty sets --- box type and second-order-cone (SOC) type, we prove equivalent computationally tractable forms of the robust dispatch problem under these uncertainty demand models via the minimax theorem and the strong duality theorem. The robust dispatch problem formulated in this work is convex over the decision variables and concave over the constructed uncertain sets with decision variables on the denominators. This form is not the standard form (i.e., linear programming (LP) or semi-definite programming (SDP) problems) that has already been covered by previous work~\cite{robustconvex, datad_robust, robust_mpc}. With proofs shown in this work, both system performance and computational tractability are guaranteed under spatial-temporal demand uncertainties. The average performance of the robust taxi dispatch solutions with SOC type of uncertain demand set is better compared with that of the box (range) type of uncertainty set in the evaluations based on data. Hence, it is critical to use a more complex type of uncertainty set, the SOC type, and the corresponding robust dispatch model we design in this work. The contributions of this work are:
\begin{itemize}
\item We develop a robust optimization model for taxi dispatch systems under spatial-temporally correlated uncertainties of predicted demand, and define an approximation of the balanced vehicle objective. The robust optimization problem of approximately balancing vacant taxis with least total idle distance is concave of the uncertain demand, convex of the decision variables and computationally tractable under multiple types of uncertainties. 
\item We design a data-driven algorithm to construct uncertainty sets that provide a desired level of probabilistic guarantee for the robust taxi dispatch solutions. 
\item We prove that there exist equivalent computationally tractable convex optimization forms for the robust dispatch problem with both polytope and second-order-cone (SOC) types of  uncertainty sets constructed from data.
\item  Evaluations on four years of taxi trip data in New York City show that the SOC type of uncertain set provides a smaller average dispatch cost than the polytope type. The average demand-supply ratio mismatch is reduced by $31.7\%$, and the average total idle distance is reduced by $10.13\%$ or about $20$ million miles annually with robust dispatch solutions under the SOC type of uncertainty set.
\end{itemize}

The rest of the paper is organized as follows. 
The taxi dispatch problem is described and formulated as a robust optimization problem given a closed and convex uncertainty set in Section~\ref{sec:prob_form}. We design an algorithm for constructing uncertain demand sets based on taxi operational records data in Section~\ref{sec:algorithm1}. Equivalent computationally tractable forms of the robust taxi dispatch problem given different forms of uncertainty sets are proved in Section~\ref{sec:algorithm}. Evaluation results based on a real data set are shown in Section~\ref{sec:simulation}. Concluding remarks are provided in Section~\ref{sec:conclusion}.

\section{Problem Formulation}
\label{sec:prob_form}
The goal of taxi dispatch is to direct vacant taxis towards current and predicted future requests with minimum total idle mileage. There are two objectives. One is sending more taxis for more requests to reduce mismatch between supply and demand across all regions in the city. The other is to reduce the total idle driving distance for picking up passengers in order to save cost. Involving predicted future demand when making current decisions benefits to increasing total profits, since drivers are able to travel to regions with better chances to pick up future passengers. In this section, we formulate a taxi dispatch problem with uncertainties in the predicted spatial-temporal patterns of demand. A typical monitoring and dispatch infrastructure is shown in Figure~\ref{sys_structure}. The dispatch center periodically collects and stores real-time information such as GPS location, occupancy status and road conditions; dispatch solutions are sent to taxis via cellular radio. 
\begin{figure}[b!]
\vspace{-18pt}
\centering
\includegraphics [width=0.3\textwidth]{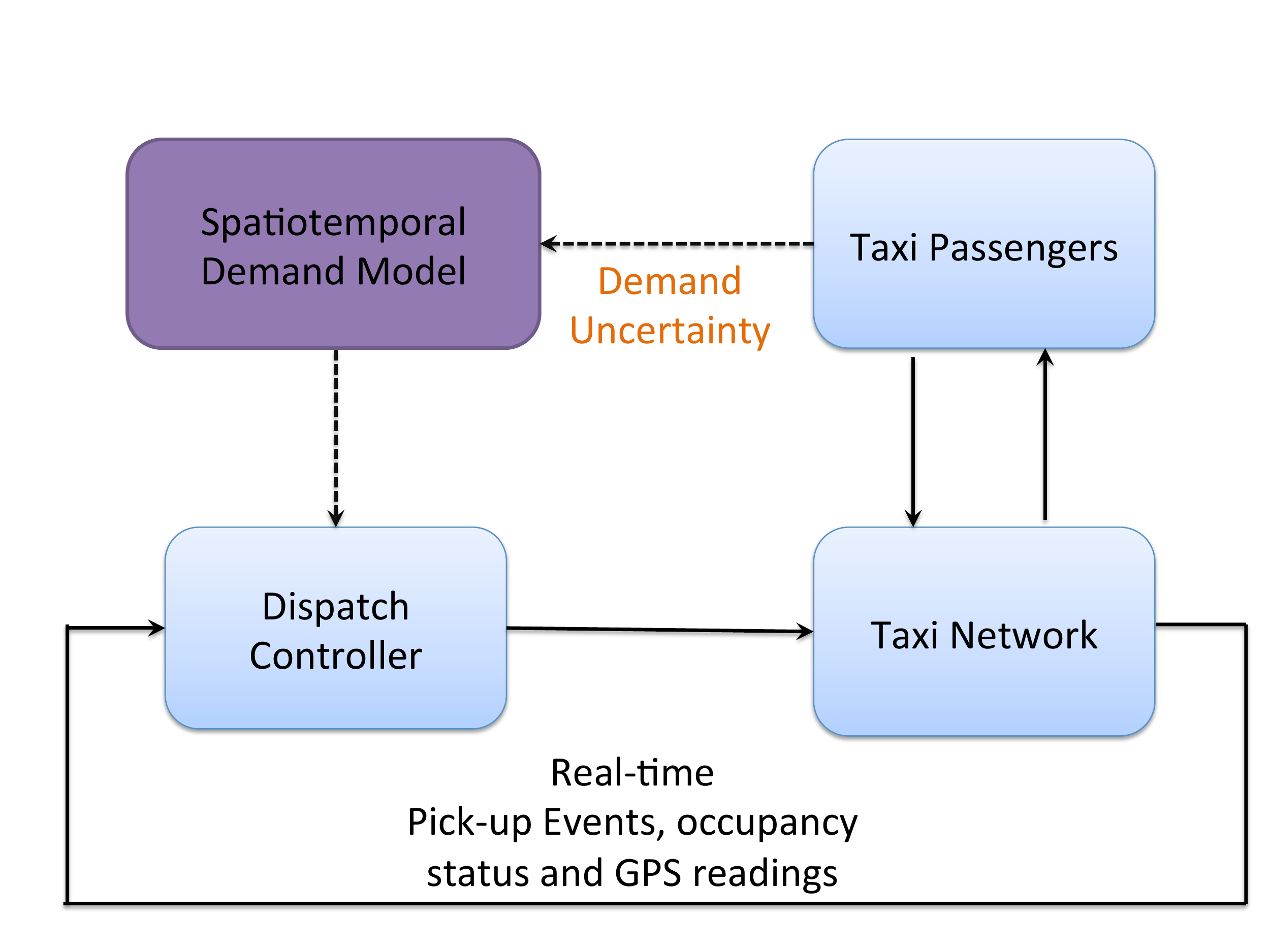}
\vspace{-15pt}
\caption{A prototype of the taxi dispatch system}
\label{sys_structure}
\end{figure}

\subsection{Problem description}
\label{network_model}
\begin{table*}
\centering
\begin{tabular}{|c|c|}
  \hline
  Parameters of~\eqref{Tm}&Description \\ \hline
  $n$ & the number of regions\\ \hline    
  $\tau$ & model predicting time horizon\\ \hline   
   $r^k \in \Delta_k $ & the uncertain total number of requests at each region during time $k$ \\ \hline
   $W \in \mathbb{R}^{n \times n}$ & weight matrix, $W_{ij}$ is the distance from region $i$ to region $j$  \\ \hline
   $P^k \in [0,1]^{n \times n}$ & probability matrix that describes taxi mobility patterns during one time slot\\ \hline 
   $L^1 \in \mathbb{N}^{n}$ & the initial number of vacant taxis at each region provided by GPS and occupancy status data\\ \hline
   $m \in \mathbb{R}^+$ & the upper bound of distance each taxi can drive idly for picking up a passenger \\ \hline
   $\alpha \in \mathbb{R}_+$ & the power on the denominator of the cost function \\ \hline
   $\beta \in \mathbb{R}_+$ & the weight factor of the objective function \\ \hline  
       Variables of~\eqref{Tm}&      \\ \hline    
      $X_{ij}^k \in \mathbb{R}_+$ &  the number of taxis dispatched from region $i$ to region $j$ during time $k$\\ \hline     
      $L^k \in \mathbb{R}^n_+$ & the number of vacant taxis at each region before dispatching at the beginning of time $k$ \\ \hline
       Parameters of Algorithm~\ref{alg: algorithm_1}&          \\ \hline  
$r_c \in \Delta$ & the uncertain concatenated demand vector of $\tau$ consecutive time slots\\ \hline
$\tilde{r}_c(d_l, t, I_p)$ & one sample of $r_c (t)$ according to sub-dataset $I_p$, records of date $d_l$ \\\hline
$\alpha_h$ & significance level of a hypothesis testing \\ \hline
\end{tabular}
\caption{Parameters and variables of taxi dispatch problem~\eqref{Tm}.}
\label{T1_parameter}
\vspace{-22pt}
\end{table*} 

For computational efficiency, we assume that the entire city is divided into $n$ regions, and time of one day is discretized to time slots indexed by $t=1, 2, \dots, K$. Taxi dispatch decision is calculated in a receding horizon process, since considering future demand when making the current dispatch decisions helps to reduce resource allocating costs~\cite{mod} and taxis' total idle distance~\cite{Miao_tase16}. At time $t$, we consider the effects of current decision to the following $(t+1,\dots,t+\tau-1)$ time slots. Only the dispatch solution for time $t$ is implemented and solutions for remaining time slots are not materialized. When the time horizon rolls forward by one step from $t$ to $(t+1)$, information about vehicle locations and occupancy status is observed and updated and we calculate a new dispatch solution for $(t+1)$. 

We define $r^k_j \geqslant 0$ as the number of total requests within region $j$ during time $k$, and $\tau$ is the model predicting time horizon. We relax the integer constraint of $r^k_{j} \in \mathbb{N}$ to positive real, since the integer constraint will make the robust dispatch problem in this section not computationally tractable. The total number of requests at region $j$ may have similar patterns as its neighbors, for instance, during busy hours, several downtown regions may all have peak demand. Meanwhile, demand during several consecutive time slots $r^k$, $k=1,\dots, \tau$ are temporally correlated. Typically, it is difficult to predict a deterministic value of passenger demand of a region during specific time. We define the spatial-temporally correlated uncertain demand by one closed and convex, or compact set $\Delta$ as
\\\centerline{$
r_c=\begin{bmatrix}(r^1)^T, (r^2)^T, \cdots, (r^{\tau})^T \end{bmatrix}^T \in \Delta \subset \mathbb{R}_+^{\tau n}.
 $}               
Where $r_c$ is called the concatenated demand vector, $(r^k)^T$ means the transpose of $r^k$. The closed, bounded, and convex form of $\Delta$ depends on the method to construct the uncertainty set, which we will describe in detail in Section~\ref{sec:algorithm1}. Since $r_c$ depends on $r^k$, and $r^k$ is one component of $r_c$, the uncertainty set for demand $r^k$ at time $k$ is defined as a closed, convex set $\Delta_k$, and a projection of $\Delta$
\\\centerline{$
\Delta_k: = \{r^k\ | \exists r^1,\dots,r^{k-1},r^{k+1},\dots,r^{\tau},\ \text{s.t.}\ r_c \in \Delta\}.
$}
Note that the projection of a convex set onto some of its coordinates is also convex~\cite[Chapter 2.3.2]{book_convex}. 

A robust dispatch model that decides the amount of vacant taxis sent between each node pair according to the demand at each node and practical constraints is described in a network flow model of Figure~\ref{network}. The edge weight of the graph represents the distance between two regions. Specifically, each region has an initial number of vacant taxis provided by real-time sensing information and an uncertain predicted demand. We define a non-negative decision variable matrix $X^k \in\mathbb{R}_+^{n \times n}$,  $X^k_{ij} \geq 0$, where $X^k_{ij}$ is the number of vehicles dispatched from region $i$ to $j$. We relax the integer constraint of $X^k_{ij} \in \mathbb{N}$ to a non-negative real constraint, since mixed integer programming is not computational tractable with uncertain parameters. Every time when making a resource allocation decision by solving the following robust optimization problem
\begin{align}
\begin{split}
&\underset{X^1}{\text{min}}\ \underset{r^1\in \Delta_1}{\text{max}}\ \underset{X^2}{\text{min}}\ \underset{r^2\in\Delta_2}{\text{max}}\dots\underset{X^{\tau}}{\text{min}}\ \underset{r^{\tau}\in\Delta_{\tau}}{\text{max}}\\
& J = \sum_{k=1}^{\tau} (J_D(X^k)+\beta J_E(X^k,r^k))\quad\quad
\text{s.t.}\quad X^k \in \mathcal{D}_c,
\end{split}
\label{general_Tm}
\end{align}
where $J_D$ is a convex cost function for allocating resources, $J_E$ is a function concave in $r^k$ and convex in $X^k$ that measures the service fairness of the resource allocating strategy, and $\mathcal{D}_c$ is a convex domain of the decision variables that describes the constraints. We define specific formulations of the objective and constraint functions in the rest of this section. 

\begin{figure}[!t]
\vspace{-8pt}
\centering
\includegraphics [width=0.28\textwidth]{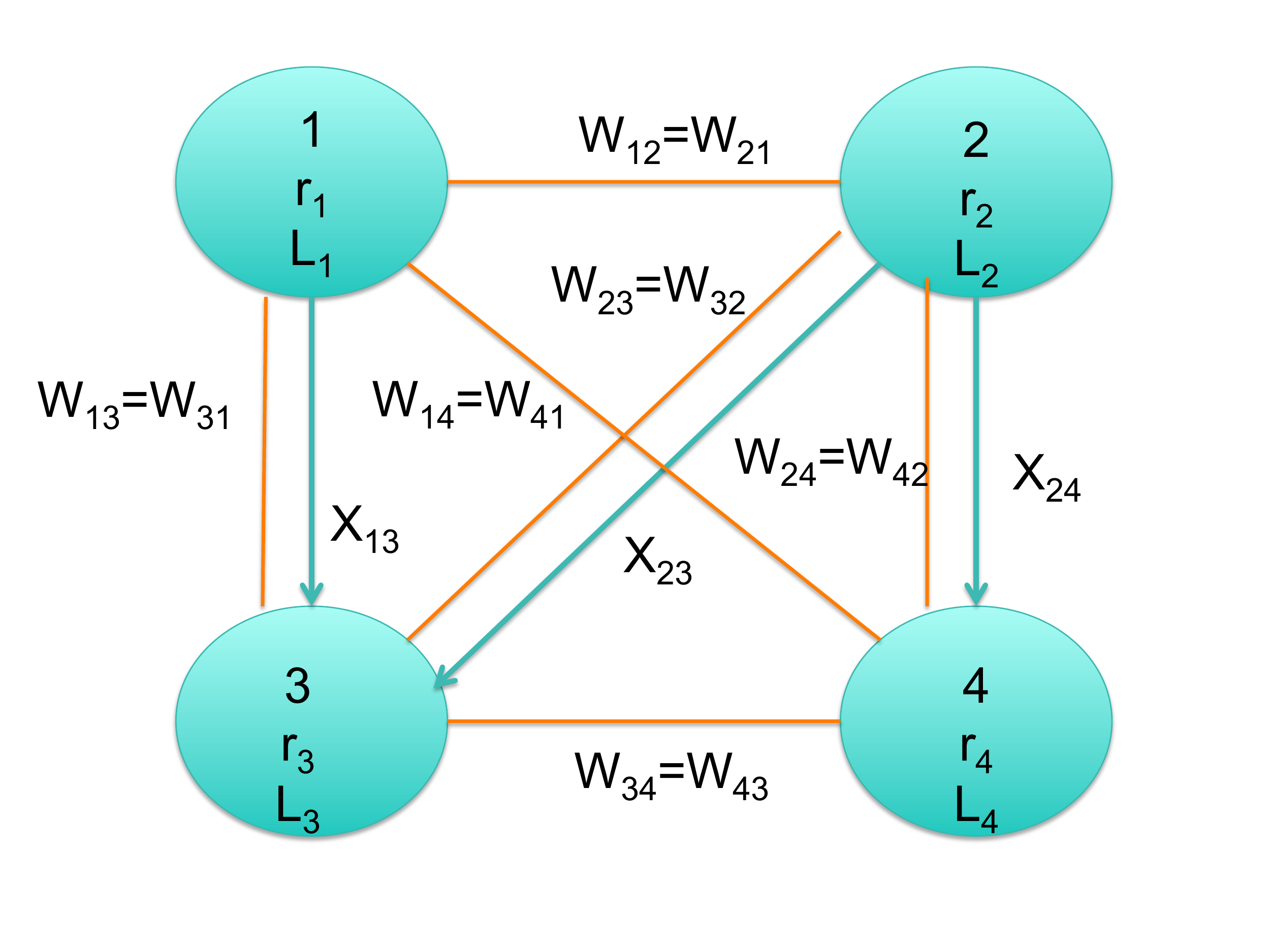}
\vspace{-15pt}
\caption{A network flow model of the robust taxi dispatch problem. A circle represents a region with region ID $1, 2, 3, 4$. We omit the superscript of time $k$ since every parameter is for one time slot only. Uncertain demand is denoted by $r_i$, $L_i$ is the original number of vacant taxis before dispatch at region $i$, and $X_{ij}$ is a dispatch solution that sending the number of vacant taxis from region $i$ to region $j$ with the distance $W_{ij}$.}
\label{network}
\vspace{-15pt}
\end{figure}
\subsection{Robust taxi dispatch problem formulation}
\label{dispatch_form}
\textbf{Estimated cross-region idle-driving distance:}
When traversing from region $i$ to region $j$,  taxi drivers take the cost of cruising on the road without picking up a passenger till the target region. Hence, we consider to minimize this kind of idle driving distance while dispatching taxis. We define the weight matrix of the network in Fig.~\ref{network} as $W\in \mathbb{R}^{n\times n}$, where $W_{ij}$ is the distance between region $i$ and region $j$. The across-region idle driving cost according to $X^k$ is 
\begin{align}
J_D (X^k) = \sum_{i=1}^{n} \sum_{j=1}^n X^k_{ij} W_{ij}.
\label{JD}
\end{align}
We assume that the region division method is time-invariant in this work, and $W$ is a constant matrix for the optimization problem formulation -- for instance, the value of $W_{ij}$ represents the length of shortest path on streets from the center of region $i$ to the center of region $j$
\footnote{For control algorithms with a dynamic region division method, the distance matrix can be generalized to a time dependent matrix $W^k$ as well.}.

The distance every taxi can drive should be bounded by a threshold parameter $m^k \in \mathbb{R}^+$ during limited time 
\centerline{$ 
X^k_{ij}=0\  \text{if}\ W_{ij} > m^k, 
$}
which is equivalent to 
 \begin{align}
X_{ij}^k \geqslant 0,\quad X_{ij}^k W_{ij} \leq m^k X^k_{ij},\quad \forall i, j \in \{1, \dots, n\}.
\label{idle_upper}
\end{align}
To explain this, assume the constraint~\eqref{idle_upper} holds. If $W_{ij} >m^k$ and $X^k_{ij}>0$, we have $X_{ij}^k W_{ij} > m^k X^k_{ij}$, which contradicts to~\eqref{idle_upper}. The threshold $m^k$ is related to the length of time slot and traffic conditions on streets. For instance, with an estimated average speed of cars in one city during time $k=1,\dots,\tau$, and idle driving time to reach a dispatched region is required to be less than $10$ minutes, then the value of $m^k$ should be the distance one taxi can drive during $10$ minutes with the current average speed on road.

\textbf{Metric of serving quality}:
We design the metric of service quality as a function $J_E (X^k, r^k)$ concave in $r^k$ and convex in $X^k$ in this work for computational efficiency~\cite{robustconvex}. Besides vacant taxis traverse to region $j$ according to matrix $X^k$, we define $L^k_j \in \mathbb{R}_+$ as the number of vacant taxis at region $j$ before dispatching at the beginning of time $k$, and $L^1 \in \mathbb{R}_+^{n}$ is provided by real-time sensing information. We assume that the total number of vacant taxis is greater than the number of regions, i.e., $N^k \geqslant n$, and each region should have at least one vacant taxi after dispatch. Then the total number of vacant taxis at region $i$ during time $k$ satisfies that
\begin{align}
&\sum_{j=1}^{n} X^k_{ji}-\sum_{j=1}^{n} X^k_{ij} + L^k_i >0,\label{total_i}\\
&\sum_{i=1}^{n}\left (\sum_{j=1}^{n} X^k_{ji}-\sum_{j=1}^{n} X^k_{ij} + L^k_i \right)=\sum_{i=1}^n L^k_i =N^k.
\label{total_N}
\end{align}

One service metric is fairness, or that the demand-supply ratio of each region equals to that of the whole city. A balanced distribution of vacant taxis is an indication of good system performance from the perspective that a customer's expected waiting time is short as shown by a queuing theoretic model~\cite{mod}. Meanwhile, a balanced demand-supply ratio means that regions with less demand will get less resources, and idle driving distance will be reduced in regions with more supply than demand if we pre-allocate possible redundant supply to those regions in need. We aim to minimize the mismatch value or the total difference between local region demand-supply ratio and the global demand-supply ratio of the whole city, similarly as the objective defined in~\cite{taxi_Feiiccps15,Miao_tase16}
\begin{align}
	\sum_{k=1}^{\tau} \sum_{i=1}^{n}\left|\frac{r^k_i}{\sum\limits_{j=1}^{n} X^k_{ji}-\sum\limits_{j=1}^{n} X^k_{ij} + L^k_i}-\frac{\sum\limits_{j=1}^{n} r_j^k}{N^k}\right|.
	\label{mismatch}
\end{align}

However, the function~\eqref{mismatch} is not concave in $r^k$ for any $X^k$. It is worth noting we need a function $J_E (X^k, r^k) $ concave in $r^k$ for any $X^k$, and convex in $X^k$ for any $r^k$, to make sure the robust optimization problem is computationally tractable. Hence, we define
\begin{align}
J_E (X^k, r^k) = \sum_{i=1}^{n}\frac{r^k_i}{\left (\sum\limits_{j=1}^{n} X^k_{ji}-\sum\limits_{j=1}^{n} X^k_{ij} + L^k_i \right)^\alpha},\quad\alpha > 0
\label{JE}
\end{align}
as a service fairness metric to minimize. This is because we approximately minimize~\eqref{mismatch} by minimizing~\eqref{JE} under the constraints~\eqref{total_i} and~\eqref{total_N} with an $\alpha$ value chosen according to the desired approximation level, and the following Lemma explains this approximation.
\begin{lemma}
Given deterministic demand vectors $(r^1,\dots,r^{\tau})$ and initial number of vacant vehicles before dispatch $(L^1,\dots, L^{\tau})$ that satisfy $r^k_i \geqslant 1$, $L_i^k \geqslant 0$, $\sum_{i=1}^n L^k_i =N^k$, for any $\epsilon_0 >0$,  any $i\in\{1, \dots, n\}$, $k=1,\dots, \tau$, there exists an $\alpha>0$, such that the optimal solution $(X^k)^*$ by minimizing~\eqref{JE} under constraints~\eqref{total_i} and~\eqref{total_N} satisfies
\begin{align}
\begin{split}
\sum\limits_{k=1}^{\tau} \sum_{i=1}^{n}\left|\frac{r^k_i}{\sum\limits_{j=1}^{n} (X^k_{ji})^{*}-\sum\limits_{j=1}^{n} (X^k_{ij})^{*} + L^k_i}-\frac{\sum\limits_{j=1}^{n} r_j^k}{N^k}\right| < n \tau \epsilon_0.
\end{split}
\label{epsilon0_lemma}
\end{align}
\label{jematch}
\end{lemma}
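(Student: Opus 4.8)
The plan is to eliminate $X^k$ in favor of the after-dispatch vacant-taxi counts, solve the resulting decoupled convex programs in closed form, and then let $\alpha\to 0^+$ (so the asserted $\alpha$ is in fact a sufficiently small positive number).

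First I would substitute $y^k_i := \sum_{j=1}^n X^k_{ji} - \sum_{j=1}^n X^k_{ij} + L^k_i$, so that both \eqref{JE} and the constraints \eqref{total_i}, \eqref{total_N} depend on $X^k$ only through $y^k=(y^k_1,\dots,y^k_n)$. I claim the attainable set of $y^k$ under \eqref{total_i}, \eqref{total_N} is exactly the open simplex $S_k := \{y\in\mathbb{R}^n : y_i>0,\ \sum_i y_i = N^k\}$ (recall $\sum_i L^k_i=N^k$). One inclusion is immediate from the constraints; for the reverse, given any $y\in S_k$, set $v_i:=y_i-L^k_i$ (so $\sum_i v_i=0$), $S:=\sum_\ell \max(v_\ell,0)$, and $X^k_{ij}:=\max(-v_i,0)\max(v_j,0)/S$ when $S>0$ (and $X^k\equiv 0$ when $v\equiv 0$); a short check shows this nonnegative $X^k$ realizes $y$. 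Since $\sum_{k=1}^\tau J_E(X^k,r^k)$ and the constraints both separate over time slots, the minimization splits into $\tau$ independent problems $\min_{y^k\in S_k}\sum_{i=1}^n r^k_i (y^k_i)^{-\alpha}$, and every optimizer $(X^k)^*$ yields the same optimal $y^{k,*}$ because the objective is strictly convex in $y^k$.

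Next I would solve each subproblem. For $\alpha>0$ and $r^k_i\ge 1$, each term $r^k_i (y^k_i)^{-\alpha}$ is strictly convex on $(0,\infty)$ and blows up as $y^k_i\downarrow 0$, so the minimizer lies in the interior of $S_k$ and is characterized by the Lagrange condition $\alpha r^k_i (y^k_i)^{-\alpha-1}=\lambda$ for a common $\lambda>0$. Hence $y^k_i\propto (r^k_i)^{1/(\alpha+1)}$, and normalizing with $\sum_i y^k_i=N^k$ gives $(y^k_i)^* = \frac{N^k (r^k_i)^{1/(\alpha+1)}}{\sum_{j=1}^n (r^k_j)^{1/(\alpha+1)}}$. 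Substituting, a single mismatch term becomes
\[
\left|\frac{r^k_i}{(y^k_i)^*}-\frac{\sum_{j} r^k_j}{N^k}\right| = \frac{1}{N^k}\left|(r^k_i)^{\alpha/(\alpha+1)}\sum_{j=1}^n (r^k_j)^{1/(\alpha+1)} - \sum_{j=1}^n r^k_j\right|.
\]

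Finally I would let $\alpha\to 0^+$. Because every $r^k_i\ge 1>0$, the maps $\alpha\mapsto (r^k_i)^{\alpha/(\alpha+1)}$ and $\alpha\mapsto (r^k_j)^{1/(\alpha+1)}$ are continuous with limits $1$ and $r^k_j$ as $\alpha\downarrow 0$, so the right-hand side above tends to $0$ for each of the finitely many pairs $(i,k)$. Picking $\alpha>0$ small enough that every mismatch term is below $\epsilon_0$, and summing over the $n\tau$ terms, yields \eqref{epsilon0_lemma}. I do not anticipate a genuine obstacle: the only points needing care are checking that the attainable $y^k$-set is the full simplex (settled by the explicit flow construction) and noting that one small $\alpha$ works uniformly because there are finitely many regions and time slots; the rest is standard convex analysis and continuity.
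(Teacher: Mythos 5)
Your proposal is correct and follows essentially the same route as the paper's proof: reduce to the after-dispatch supply vector on the simplex, solve the first-order conditions to get $y^k_i\propto (r^k_i)^{1/(\alpha+1)}$, and use continuity in $\alpha$ (equivalently $\gamma=1/(\alpha+1)\to 1$) over the finitely many $(i,k)$ pairs. Your explicit construction $X^k_{ij}=\max(-v_i,0)\max(v_j,0)/S$ is a cleaner, fully verified version of the paper's feasibility argument for realizing a prescribed supply vector by a nonnegative flow.
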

\begin{proof}
See Appendix~\ref{appendix_probform}.
\end{proof}
According to the proof, we can always choose $\alpha$ to be small enough (or close enough to $0$) in order to obtain a desired level of approximation $\epsilon_0$. Hence, in the experiments of Section~\ref{sec:simulation}, we numerically choose $\alpha=0.1$ based on simulation results. Therefore, with function~\eqref{JE}, we map the objective of balancing supply according to demand across every region in the city to a computationally tractable function that concave in the uncertain parameters and convex in the decision variables for a robust optimization problem. 

The number of initial vacant taxis $L_j^{k+1}$ depends on the number of vacant taxis at each region after dispatch during time $k$ and the mobility patterns of passengers during time $k$, while we do not directly control the latter. We define $P^k_{ij}$ as the probability that a taxi traverses from region $i$ to region $j$ and turns vacant again (after one or several drop off events) at the beginning of time $(k+1)$, provided it is vacant at the beginning of $k$. Methods of getting $P^k_{ij}$ based on data include but not limited to modeling trip patterns of taxis~\cite{taxi_Feiiccps15} and autonomous mobility on demand systems~\cite{mod}. Then the number of vacant taxis within each region $j$ by the end of time $k$ satisfies
\begin{align}
\vspace{-8pt}
(L^{k+1})^T=( \mathbf{1}^T_n X^k -(X^k \mathbf{1}_n)^T+(L^k)^T )P^k.
\label{Lk}
\end{align}

\textbf{Weighted-sum objective function:} Since there exists a trade-off between two objectives, we define a weighted-sum with parameter $\beta > 0$ of the two objectives $J_D(X^k)$ defined in~\eqref{JD} and $J_E(X^k, r^k)$ defined in~\eqref{JE} as the objective function. Let $X^{1:\tau}$ and $L^{2:\tau}$ represent decision variables $(X^1, \dots,X^{\tau})$ and $(L^2,\dots, L^{\tau})$. Without considering model uncertainties corresponding to $r^k$, a convex optimization form of taxi dispatch problem is 
\begin{align}
\begin{split}
&\underset{X^{1:\tau}, L^{2:\tau}}{\text{min}}
 \quad J =\sum_{k=1}^{\tau} (J_D(X^k)+\beta J_E(X^k,r^k))\\
&\text{s.t.\ \     ~\eqref{idle_upper},~\eqref{total_i},~\eqref{Lk}}.
\end{split}
\label{nonr}
\end{align}

\textbf{Robust taxi dispatch problem formulation}:
We aim to find out a dispatch solution robust to an uncertain demand model in this work. For time $k=1,\dots, \tau$, uncertain demand $r^k$ only affects the dispatch solutions of time $(k, k+1, \dots, \tau)$, and dispatch solution at $k+\tau$ is related to uncertain demand at $(k+1, \dots, \tau)$, similar to the multi-stage robust optimization problem in~\cite{multistage}. However, the control laws considered in~\cite{multistage} are polynomial in past-observed uncertainties; in this work, we do not restrict the decision variables to be any forms of previous-observed uncertain demands. The dispatch decisions are numerical optimal solution of a robust optimization problem. With a list of parameters and variables shown in Table~\ref{T1_parameter}, considering both the current and future dispatch costs when making the current decisions, we define a robust taxi dispatch problem as the following
\begin{align}
\begin{split}
\underset{X^1}{\text{min}}\ &\underset{r^1\in \Delta_1}{\text{max}}\ \underset{X^2, L^2}{\text{min}}\ \underset{r^2\in\Delta_2}{\text{max}}\dots\underset{X^{\tau},L^{\tau}}{\text{min}}\ \underset{r^{\tau}\in\Delta_{\tau}}{\text{max}}\\
 \quad J 
=& \sum_{k=1}^{\tau} \sum_{i=1}^n\left( \sum\limits_{j=1}^n X^k_{ij} W_{ij}+\frac{\beta r^k_i}{\left (\sum\limits_{j=1}^{n} X^k_{ji}-\sum\limits_{j=1}^{n} X^k_{ij} + L^k_i \right)^{\alpha}}\right) \\
\text{s.t.}\quad
& (L^{k+1})^T=( \mathbf{1}^T_n X^k -(X^k \mathbf{1}_n)^T+(L^k)^T )P^k,\\
&\mathbf{1}^T_n X^k -(X^k \mathbf{1}_n)^T+(L^k)^T \geqslant \mathbf{1}_n^T,\\
&  X^k_{ij} W_{ij} \leq m X^k_{ij}, \\
&X^k_{ij} \geq 0,\quad i,  j \in \{1, 2, \dots, n \}.
\end{split}
\label{Tm}
\end{align}
After getting an optimal solution $(X^{1})^*$ of~\eqref{Tm}, we adjust the solution by rounding methods to get an integer number of taxis to be dispatched towards corresponding regions. It does not affect the optimality of the result much in practice, since the objective or cost function is related to the demand-supply ratio of each region. A feasible integer solution of~\eqref{Tm} always exists, since $X^k_{ij}=0,\ \forall i,j,k$ is feasible. Although we cannot provide any theoretical guarantee on the suboptimality of the rounded integer solution, in the numerical experiments the costs under integer solution after rounding and the original real value optimal solution are comparable.

\section{Algorithm For Constructing Uncertain Demand Sets}
\label{sec:algorithm1}
With many factors affecting taxi demand during different time within different areas of a city, explicitly describing the model is a strict requirement and errors of the model will affect the performance of dispatch frameworks. Considering future demand and demand uncertainties benefits for minimizing worst-case demand-supply ratio mismatch error and idle distance~\cite{taxi_Feiiccps15, Miao_tase16}. It is then essential to construct a model that captures the spatial-temporal demand uncertainties and provides a probabilistic guarantee about the vehicle resource allocation cost. We construct demand uncertainty sets via Algorithm~\ref{alg: algorithm_1}---getting a sample set of $r_c$ from the original dataset and partition the sample set, bootstrapping a threshold for the test statistics according to the requirement of the probability guarantee, and calculating the model of uncertainty sets based on the thresholds. 

\subsection{An uncertainty set with probabilistic guarantee}
For convenience, we concisely denote all the variables of the taxi dispatch problem as $x$. Assume that we do not have knowledge about the true distribution $\mathbb{P}^*(r_c)$ of the random demand vector $r_c$. WIth the objective function $J(r_c, x)$ of problem~\eqref{Tm}, the probabilistic guarantee for the event that the true dispatch cost being smaller than the optimal dispatch cost is defined as the following chance constrained problem
\begin{align}
\begin{split}
\underset{x}{\text{min}} \quad &M\\
\text{s.t.}\quad & P_{r_c \sim \mathbb{P}^*(r_c)} (f(r_c,x) = J(r_c,x)-M \leqslant 0)\geqslant 1-\epsilon.
\end{split}
\label{obj_2}
\end{align}
The constraint $f$ and objective function $J$ are concave in $r_c$ for any $x$, and convex in $x$ for any $r_c$. Without loss of generality about the objective and constraint functions, equivalently we aim to find solutions for
\begin{align}
\begin{split}
\underset{x}{\text{min}}\ \quad &J(r_c, x)\\
\text{s.t.}\quad & P_{r_c \sim \mathbb{P}^*(r_c)}(f(r_c,x) \leqslant 0) \geqslant 1-\epsilon.
\end{split}
\label{chance}
\end{align}
When it is difficult to explicitly estimate $\mathbb{P}^*(r_c)$, we solve the following robust problem such that its optimal solutions satisfy the probabilistic guarantee requirement for~\eqref{chance}
\begin{align}
\begin{split}
\underset{x}{\text{min}}\ \underset{r_c \sim \Delta}{\text{max}} \ \quad J (r_c, x),\quad 
\text{s.t.}\quad f(r_c, x) \leqslant 0.
\end{split}
\label{constraint}
\end{align}
Then $r_c$ of problem~\eqref{constraint} can be any vector in the uncertainty set $\Delta$ instead of a random vector in~\eqref{chance}. The uncertainty set that keeps the optimal solution of~\eqref{constraint} satisfying the constraints of problem~\eqref{chance} is defined as the following:
\begin{problem}
Construct an uncertainty set $\Delta, r_c\in \Delta$, given $0<\epsilon<1$ and samples of random vectors $r_c$, such that

(P1). The robust constraint~\eqref{constraint} is computationally tractable.

(P2). The set $\Delta$ implies a probabilistic guarantee for the true distribution $\mathbb{P}^*({r}_c)$ of a random vector ${r}_c$ at level $\epsilon$, that is, for any optimal solution $x^* \in \mathbb{R}^k$ and for any function $f(r_c,x)$ concave in $r_c$, we have the implication:
\begin{align}
\begin{split}
&\text{If}\ f(r_c,x^*)  \leq 0,\quad \text{for}\ \forall r_c\in\Delta,\\ 
 &\text{then}\ \mathbb{P}_{r_c \sim \mathbb{P}^*(r_c)}^*(f({r}_c, x^*)\leqslant 0) \geq 1-\epsilon. 
\end{split}
\label{epsilon}
\end{align} 
\label{prob_1}
\end{problem}
\vspace{-8pt}
The given probabilistic guarantee level $\epsilon$ is related to the degree of conservativeness of the robust optimization problem. 

\subsection{Aggregating demand and partition the sample set}
Every $\tau$ discretized time slots of demand $(r^t, \dots, r^{t+\tau})$ are concatenated to a vector $r_c (t)$. The first step is to transform the original taxi operational data to a dataset of sampled vector $\tilde{r}_c(d,t)$ of different dates $d$ for each index $t$. For instance, assume we choose the length of each time slot as one hour, and the dataset records all trip information of taxis during each day. According to the start time and GPS coordinate of each pick-up event, we aggregate the total number of pick up events during one hour at each region to get samples $\tilde{r}_c(d,t)$.

It is always possible to describe the support of the distribution of all samples contained in the dataset even they do not follow the same distribution, as explained in Figure~\ref{partition}. When there is prior knowledge or categorical information to partitioned the dataset into several subsets, we get a more accurate uncertainty set for each sub-dataset to provide the same probabilistic guarantee level compared with the uncertainty set from the entire dataset. Clustering algorithms with categorical information~\cite{Huang_kmodes} is applicable for dataset partition when information besides pick up events is available, such as weekdays/weekends, weather or traffic conditions.  It is worth noting that if the uncertainty sets are built for a categorical information set $\mathcal{I}=\{\mathcal{I}_1, \mathcal{I}_2,\dots\}$, then for the robust dispatch problems, we require the same set of categories is available in real-time, hence we apply the uncertainty set of $\mathcal{I}_1$ to find solutions when the current situation is considered as $\mathcal{I}_1$. 

 \begin{figure}[!t]
\centering
\includegraphics [width=0.30\textwidth]{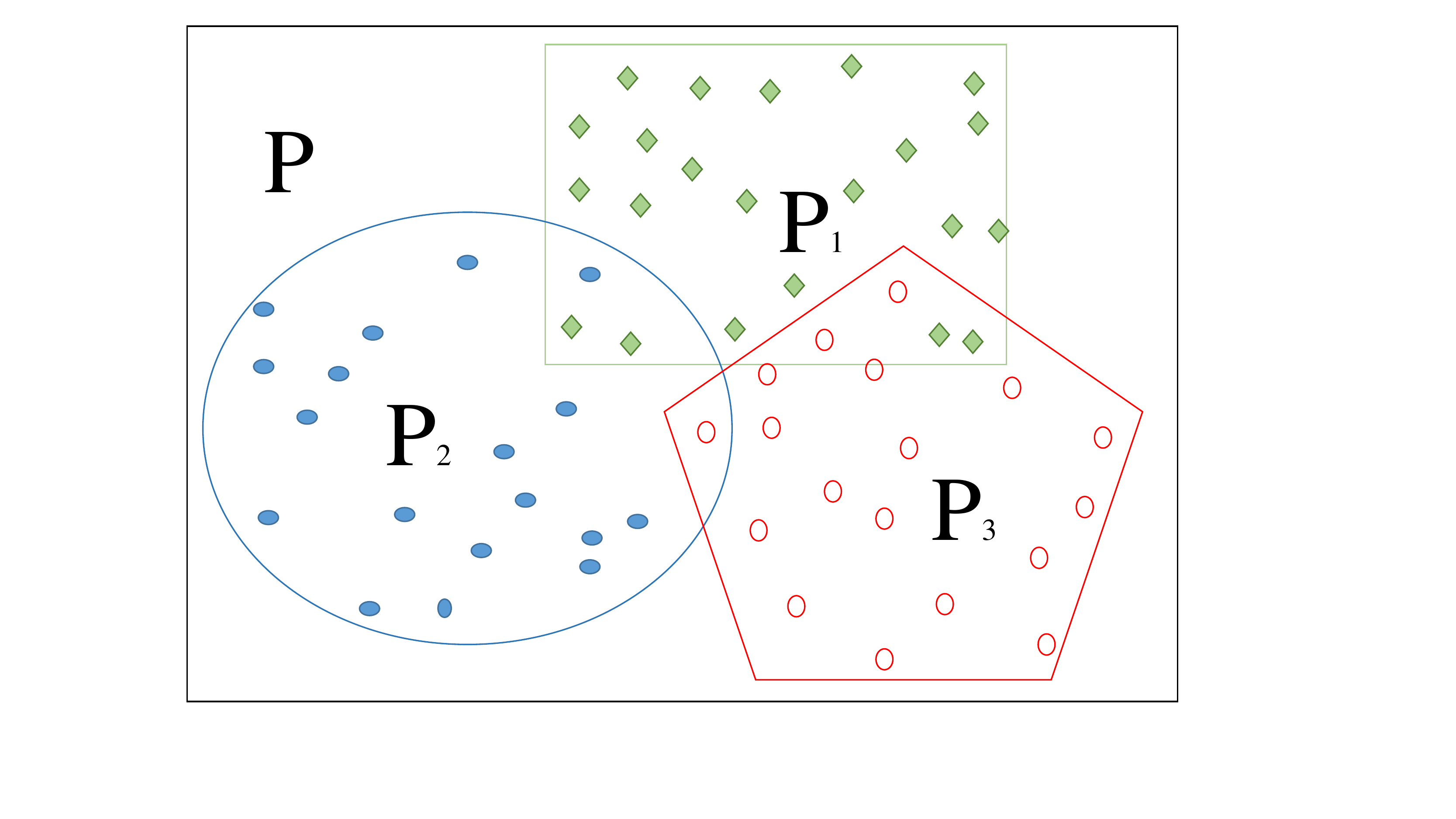}
\vspace{-15pt}
\caption{Intuition for partitioning the whole dataset. When the data set includes data from three distributions $P_1, P_2, P_3$, without prior knowledge, we can build a larger uncertainty set that describes the range of all samples in the dataset. The problem is that the uncertainty set is not accurate enough.} 
\label{partition}
\vspace{-15pt}
\end{figure}

\subsection{Uncertainty Modeling}
\label{sec: theory}
The basic idea to define an uncertainty set is to find a threshold for a hypothesis testing that is acceptable with respect to the given dataset and a required probabilistic guarantee level, and the formula of an uncertainty set is related to the threshold value of an acceptable hypothesis testing. Given the original data, the null hypothesis $H_0$, $\alpha_h$, and the test statistics $T$, we need to find a threshold that accepts $H_0$ at significance value $\alpha_h$ for each subset of sampled demand vectors. Since we do not assume that the marginal distribution for every element of vector $r_c$ is independent with each other, we apply two models without any assumptions about the true distribution $\mathbb{P}^*(r_c)$  in the robust optimization literature~\cite{datad_robust}~\cite{N_1970}~\cite{SC_2003} on the spatial-temporally correlated demand data. 

\subsubsection{Box type of uncertainty demand sets built from marginal samples}
\label{box_uncertain}
One intuitive description about a random vector is to define a range for each element. For instance, consider the following multivariate hypothesis holds simultaneously for $i=1,2,\dots, \tau n$ with given thresholds $\bar{q}_{i,0}, \underline{q}_{i,0} \in \mathbb{R}$~\cite{N_1970} 
\begin{align}
\begin{split}
H_{0,i} : &{inf\{t: \mathbb{P}(r_{c,i} \leqslant t)\geqslant 1-\frac{\epsilon}{\tau n}\}} \geqslant \bar{q}_{i,0}\\
              &{inf\{t: \mathbb{P}(-r_{c,i} \leqslant t)\geqslant 1-\frac{\epsilon}{\tau n}\}} \geqslant -\underline{q}_{i,0}.
\end{split}
\label{H_box}
\end{align}


Assume that we have $N_B$ random samples for each component $r_{c,i}$ of $r_c$, ordered in increasing value as $r^{(1)}_{c,i}, r^{(2)}_{c,i}, \dots, r^{(N_B)}_{c,i}$ no matter what is the original sampling order. We define the index $s$ by
\footnotesize
\begin{align}
s=min\left\{k \in \mathbb{N}: \sum\limits_{j=k}^{N_B}\left(\begin{array}{c}N_B\\j\end{array}\right) \left(\frac{\epsilon}{\tau n}\right)^{N_B-j} \left(1-\frac{\epsilon}{\tau n}\right)^j \leqslant \frac{\alpha_h}{2\tau n}\right\},
\label{s}
\end{align}
\normalsize
and let $s=N_B+1$ if the corresponding set is empty. 
The test $H_0$ is rejected if $r^{(s)}_{c,i}  \geqslant \bar{q}_{i,0}\  \text{or}\ - r_{c, i}^{(N_B-s+1)} \geqslant -\underline{q}_{i,0}$. To construct an uncertainty set, we need an accepted hypothesis test. Hence, we set $\bar{q}_{i,0}=r^{(s)}_{c,i}$ and $\underline{q}_{i,0}=r_{c, i}^{(N_B-s+1)}$. The following uncertainty set is then applied in this work based on the range hypothesis testing~\eqref{H_box}.
\begin{proposition}[\cite{datad_robust},~\cite{N_1970}]
If $s$ defined by equation~\eqref{s} satisfies that $N_B-s+1 < s$, then, with probability at least $1-\alpha_h$ over the sample, the set
\begin{align}
\mathcal{U}^M_{\epsilon} (r_c)=\left\{r_c \geqslant \mathbf{0}| r_{c,i}^{(N_B-s+1)}\leqslant r_{c,i} \leqslant r_{c,i}^{(s)}\right\}
\label{UM_box}
\end{align}
implies a probabilistic guarantee for $\mathbb{P}^*(r_c)$ at level $\epsilon$. 
\label{theorem_7}
\end{proposition}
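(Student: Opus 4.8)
The plan is to reduce the claimed guarantee to a single coverage estimate for the box and then prove that estimate one coordinate at a time with a classical order‑statistic argument. First I would observe that, as soon as the set is contained in the region $\{r_c: f(r_c,x^*)\le 0\}$, the implication in \eqref{epsilon} follows from monotonicity of probability alone: if $f(\cdot,x^*)\le 0$ on all of $\mathcal{U}^M_\epsilon(r_c)$, then $\{r_c: f(r_c,x^*)\le 0\}\supseteq \mathcal{U}^M_\epsilon(r_c)$, hence $\mathbb{P}^*(f(r_c,x^*)\le 0)\ge \mathbb{P}^*\big(r_c\in\mathcal{U}^M_\epsilon(r_c)\big)$ — concavity of $f$ is not used here; it is what matters for the tractability requirement~(P1) and for the SOC set, not for~(P2) of the box. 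So it suffices to show that, with probability at least $1-\alpha_h$ over the $N_B$ i.i.d. samples used to build the set, $\mathbb{P}^*\big(r_c\in\mathcal{U}^M_\epsilon(r_c)\big)\ge 1-\epsilon$.

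Next I would establish the per‑coordinate coverage. Fix $i\in\{1,\dots,\tau n\}$ and write $r_{c,i}^{(1)}\le\cdots\le r_{c,i}^{(N_B)}$ for the order statistics of the $N_B$ i.i.d. draws of $r_{c,i}$ under the $i$‑th marginal of $\mathbb{P}^*$; let $\bar q_i^\star=\inf\{t:\mathbb{P}^*(r_{c,i}\le t)\ge 1-\tfrac{\epsilon}{\tau n}\}$ be the upper quantile in \eqref{H_box}. The key estimate is $\mathbb{P}\big(r_{c,i}^{(s)}<\bar q_i^\star\big)\le\tfrac{\alpha_h}{2\tau n}$: the event $\{r_{c,i}^{(s)}<\bar q_i^\star\}$ is exactly the event that at least $s$ of the $N_B$ samples fall strictly below $\bar q_i^\star$, and that count is $\mathrm{Binomial}(N_B,\pi)$ with $\pi=\mathbb{P}^*(r_{c,i}<\bar q_i^\star)\le 1-\tfrac{\epsilon}{\tau n}$, so by monotonicity of the binomial upper tail in $\pi$ it is at most $\sum_{j=s}^{N_B}\binom{N_B}{j}\big(1-\tfrac{\epsilon}{\tau n}\big)^{j}\big(\tfrac{\epsilon}{\tau n}\big)^{N_B-j}$, which is exactly the left‑hand side of the inequality in \eqref{s} at $k=s$ and hence $\le\tfrac{\alpha_h}{2\tau n}$ by the definition of $s$. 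Thus, off a sample event of probability at most $\tfrac{\alpha_h}{2\tau n}$, we get $r_{c,i}^{(s)}\ge\bar q_i^\star$, and therefore $\mathbb{P}^*(r_{c,i}>r_{c,i}^{(s)})\le\mathbb{P}^*(r_{c,i}>\bar q_i^\star)\le\tfrac{\epsilon}{\tau n}$. Running the same argument on $-r_{c,i}$, whose $s$‑th smallest order statistic is $-r_{c,i}^{(N_B-s+1)}$ and which matches the second line of \eqref{H_box}, gives, off another sample event of probability at most $\tfrac{\alpha_h}{2\tau n}$, the analogous bound on $\mathbb{P}^*(r_{c,i}<r_{c,i}^{(N_B-s+1)})$; and the hypothesis $N_B-s+1<s$ is exactly what guarantees $r_{c,i}^{(N_B-s+1)}\le r_{c,i}^{(s)}$, so that $\mathcal{U}^M_\epsilon(r_c)$ is a genuine, non‑degenerate box.

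Finally I would take a union bound over the $2\tau n$ one‑sided ``bad'' events — one upper and one lower per coordinate — each of sample‑probability at most $\tfrac{\alpha_h}{2\tau n}$, so that with probability at least $1-\alpha_h$ over the sample all the coordinate‑wise coverage inequalities hold simultaneously. On that event, using $\mathbb{P}^*(r_c\ge\mathbf{0})=1$ and subadditivity, $\mathbb{P}^*\big(r_c\notin\mathcal{U}^M_\epsilon(r_c)\big)\le\sum_{i=1}^{\tau n}\big(\mathbb{P}^*(r_{c,i}>r_{c,i}^{(s)})+\mathbb{P}^*(r_{c,i}<r_{c,i}^{(N_B-s+1)})\big)$, which is at most $\epsilon$ by the choice of the per‑coordinate quantile level in \eqref{H_box}; combined with the reduction of the first paragraph, $\mathcal{U}^M_\epsilon(r_c)$ implies a probabilistic guarantee at level $\epsilon$. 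This is the specialization to the concatenated demand vector of the marginal‑sample uncertainty‑set construction of~\cite{datad_robust} and~\cite{N_1970}.

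The step I expect to be the main obstacle is not any single inequality but the joint bookkeeping of constants: the per‑coordinate quantile level in \eqref{H_box}, the per‑test significance $\tfrac{\alpha_h}{2\tau n}$ in \eqref{s}, and the count $2\tau n$ of one‑sided events must be chosen consistently so that the two union‑bound budgets telescope to exactly $\alpha_h$ (for the probability over the sample) and to $\epsilon$ (for the residual mass outside the box). Tied to this is the measure‑theoretic care needed when the marginals of $\mathbb{P}^*$ are not continuous: one must read $\mathbb{P}^*(r_{c,i}<\bar q_i^\star)\le 1-\tfrac{\epsilon}{\tau n}$ and its $-r_{c,i}$ analogue off the definition of the quantile and the right‑continuity of the CDF, and confirm that the stochastic‑dominance bound on the binomial upper tail still holds in the presence of atoms at the quantile, so that the order‑statistic estimate goes through without assuming $\mathbb{P}^*$ has a density.
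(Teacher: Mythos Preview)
The paper does not give its own proof of this proposition; it simply attributes the result to the cited references~\cite{datad_robust,N_1970}. Your argument is precisely the standard order-statistic plus union-bound proof underlying those references, and the logical skeleton---reduce (P2) to a coverage bound for the box, control each one-sided tail via the binomial distribution of the count below the quantile, then union-bound over $2\tau n$ one-sided events---is correct.

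One concrete bookkeeping point, which you yourself flag as the delicate step but then resolve incorrectly: with the constants \emph{exactly as written in the paper}, each one-sided marginal tail is bounded by $\epsilon/(\tau n)$ (this is the level appearing in~\eqref{H_box}), and there are $2\tau n$ such one-sided events, so your final union bound gives
\[
\mathbb{P}^*\bigl(r_c\notin\mathcal{U}^M_\epsilon(r_c)\bigr)\ \le\ 2\tau n\cdot\frac{\epsilon}{\tau n}\ =\ 2\epsilon,
\]
not $\epsilon$. To land exactly on $\epsilon$ the per-side quantile level in~\eqref{H_box} should be $\epsilon/(2\tau n)$, mirroring the factor of $2$ that already appears in the significance budget $\alpha_h/(2\tau n)$ of~\eqref{s}. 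This is a constant discrepancy in the paper's transcription of the cited result rather than a flaw in your method; modulo this factor of two (or a corresponding adjustment of the quantile level), your proof is complete.
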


\subsubsection{SOC type of uncertainty set motivated by moment hypothesis testing}
\label{soc_uncertain}
It is not easy to tell directly from the uncertainty set~\eqref{UM_box} when the range of one component changes how will others be affected. To directly show the spatial-temporal correlations of the demand, we also apply hypothesis testing related to both the first and second moments of the true distribution $\mathbb{P}^*(r_c)$ of the random vector~\cite{SC_2003}. 
\begin{align}
H_0: \mathbb{E}^{\mathbb{P}^*} [r_c] = r_0 \ \text{and}\ \mathbb{E}^{\mathbb{P}^*} [r_c r_c^T] -\mathbb{E}^{\mathbb{P}^*} [r_c]\mathbb{E}^{\mathbb{P}^*} [r_c^T] =\Sigma_0,
\label{H_0_soc}
\end{align}
where $r_0$ and $\Sigma_0$ are the (unknown) true mean and covariance of $r_c$,  $\mathbb{E}^{\mathbb{P}^*} [r_c]$ and $\mathbb{E}^{\mathbb{P}^*} [r_c r_c^T]$ are the estimated mean and covariance from data. Without knowledge of $r_0$ and $\Sigma_0$, $H_0$ is rejected when the difference among the estimation of mean or covariance according to multiple times of samples is greater than the threshold, i.e., $\|\mathbb{E}^{\mathbb{P}}[\tilde{r}_c]-\hat{r}_c\|_2 > \Gamma_1^B$ or $\|\mathbb{E}^{\mathbb{P}}[\tilde{r}_c\tilde{r}_c^T] -\mathbb{E}^{\mathbb{P}}[\tilde{r}_c]\mathbb{E}^{\mathbb{P}}[\tilde{r}_c^T]-\hat{\Sigma}\|_F > \Gamma_2^B$, where $\mathbb{E}^{\mathbb{P}}[\tilde{r}]$ is the estimated mean value of one experiment, $\hat{r}_c$ and $\hat{\Sigma}$ are the estimated mean and covariance from multiple experiments, $\Gamma_1^B$ and $\Gamma_2^B$ are the thresholds. The remaining problem is then to find the values of the thresholds such that hypothesis testing~\eqref{H_0_soc} holds given the dataset. The uncertainty set derived based on the moment hypothesis testing is defined in the following proposition.
\begin{proposition}[\cite{datad_robust},~\cite{SC_2003}]
With probability at least $1-\alpha_h$ with respect to the sampling, the following uncertainty set $\mathcal{U}_{\epsilon}^{CS}(r_c)$ implies a probabilistic guarantee level of $\epsilon$ for $\mathbb{P}^*(r_c)$ 
\begin{align}
\begin{split}
\mathcal{U}_{\epsilon}^{CS} (r_c)=&\{r_c\geqslant \mathbf{0}| r_c= \hat{r}_c + y + C^T w: \exists y, w \in \mathbb{R}^{n\tau} \ s.t.\\
&\quad \|y\|_2 \leqslant \Gamma_1^B, \|w\|_2 \leqslant\sqrt{\frac{1-\epsilon}{\epsilon}}\},
\end{split}
\label{u_cs}
\end{align}
where $C^TC=\hat{\Sigma} + \Gamma_2^B \mathbf{I}$ is a Cholesky decomposition. 
\label{theorem_10}
\end{proposition}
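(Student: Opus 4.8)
The plan is to confine all the sampling randomness to a single ``good event'' and then, on that event, run a deterministic worst-case argument based on the one-sided Chebyshev (Cantelli) inequality, one half-space at a time. Write $K=\{r_c\ge\mathbf 0\}$ for the support of $\mathbb{P}^*$, let $\rho=\sqrt{(1-\epsilon)/\epsilon}$, and let $\mathcal{E}=\{\hat r_c+y+C^Tw:\|y\|_2\le\Gamma_1^B,\ \|w\|_2\le\rho\}$ be the ellipsoidal set, so that $\mathcal{U}_{\epsilon}^{CS}=\mathcal{E}\cap K$. First I would fix any candidate $x^*$ and any $f(\cdot,x^*)$ concave in $r_c$ with $f(r_c,x^*)\le0$ on $\mathcal{U}_{\epsilon}^{CS}$, and set $S=\{r_c\in K:f(r_c,x^*)>0\}$, which is convex. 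If $S=\emptyset$ then $\mathbb{P}^*(f(r_c,x^*)\le0)=1$ and we are done; otherwise $S\cap\mathcal{E}=\emptyset$, because a common point would lie in $\mathcal{E}\cap K=\mathcal{U}_{\epsilon}^{CS}$, where $f\le0$. Since $\mathcal{E}$ is compact and $S$ convex, the separating hyperplane theorem yields $a\neq\mathbf 0$ and $b$ with $a^Tr_c\le b$ on $\mathcal{E}$ and $a^Tr_c\ge b$ on $S$. Hence $\mathbb{P}^*(f(r_c,x^*)>0)=\mathbb{P}^*(r_c\in S)\le\mathbb{P}^*(a^Tr_c\ge b)$, and, computing the support function of the ellipsoid, $b\ge\max_{\mathcal{E}}a^Tr_c=a^T\hat r_c+\Gamma_1^B\|a\|_2+\rho\|Ca\|_2$. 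So it remains only to show that for every such pair $(a,b)$ one has $\mathbb{P}^*(a^Tr_c\ge b)\le\epsilon$ --- a statement that no longer refers to $x^*$ or $f$, so a single good event over the sample handles all the implications demanded in (P2) at once.

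The good event is the acceptance region of the moment test. By construction the thresholds $\Gamma_1^B,\Gamma_2^B$ are the bootstrap quantiles that make~\eqref{H_0_soc} acceptable at significance $\alpha_h$; the concentration results of~\cite{SC_2003,datad_robust} then guarantee that, with probability at least $1-\alpha_h$ over the sample (union bound splitting $\alpha_h$ between the mean and covariance sub-tests), the true moments satisfy $\|r_0-\hat r_c\|_2\le\Gamma_1^B$ and $\|\Sigma_0-\hat\Sigma\|_F\le\Gamma_2^B$. On this event $\|\Sigma_0-\hat\Sigma\|_2\le\|\Sigma_0-\hat\Sigma\|_F\le\Gamma_2^B$, so $\Sigma_0\preceq\hat\Sigma+\Gamma_2^B\mathbf I=C^TC$; in particular $C$ in~\eqref{u_cs} is well defined and $a^T\Sigma_0 a\le\|Ca\|_2^2$ for every $a$. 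I would condition on this event from here on; everything below is deterministic given the sample.

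Now the Cantelli step. Put $Z=a^T(r_c-r_0)$, which under $\mathbb{P}^*$ has mean $0$ and variance $a^T\Sigma_0 a\le\|Ca\|_2^2$. Using $a^T(r_0-\hat r_c)\le\|a\|_2\|r_0-\hat r_c\|_2\le\Gamma_1^B\|a\|_2$ together with the bound on $b$, one gets $a^Tr_c-b\le a^Tr_c-a^T\hat r_c-\Gamma_1^B\|a\|_2-\rho\|Ca\|_2\le Z-\rho\|Ca\|_2$, so $\{a^Tr_c\ge b\}\subseteq\{Z\ge\rho\|Ca\|_2\}$. The one-sided Chebyshev inequality, together with the fact that the Cantelli bound $\sigma^2/(\sigma^2+t^2)$ is nondecreasing in $\sigma^2$ (so replacing the true variance by $\|Ca\|_2^2$ is legitimate), gives $\mathbb{P}^*(Z\ge\rho\|Ca\|_2)\le\|Ca\|_2^2/(\|Ca\|_2^2+\rho^2\|Ca\|_2^2)=1/(1+\rho^2)=\epsilon$, since $\rho^2=(1-\epsilon)/\epsilon$. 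Therefore $\mathbb{P}^*(a^Tr_c\ge b)\le\epsilon$, which combined with the first paragraph means $\mathcal{U}_{\epsilon}^{CS}$ implies a probabilistic guarantee at level $\epsilon$; and this holds on an event of sample-probability at least $1-\alpha_h$. The choice $\sqrt{(1-\epsilon)/\epsilon}$ in~\eqref{u_cs} is exactly what calibrates the Cantelli bound to $\epsilon$.

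I expect the real work to be the second step: certifying that the bootstrapped $\Gamma_1^B,\Gamma_2^B$ truly envelope the unknown mean and covariance with probability $1-\alpha_h$. This rests on the finite-sample concentration inequalities for empirical means and centered empirical second moments and on the validity of the resampling scheme in~\cite{SC_2003,datad_robust}, and in particular on controlling the cross term $\mathbb{E}[r_c]\mathbb{E}[r_c^T]$ in the covariance estimator; I would import these facts rather than reprove them. By contrast, the nonnegativity constraint $r_c\ge\mathbf 0$ in~\eqref{u_cs} costs nothing: because the superlevel set $S$ automatically lies in the support $K$, it is already disjoint from the \emph{full} ellipsoid $\mathcal{E}$, so the separation in the first paragraph can be carried out against $\mathcal{E}$ itself and no assumption on how $r_c\ge\mathbf 0$ interacts with the ellipsoid is required.
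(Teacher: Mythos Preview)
Your argument is correct and is, in fact, the standard route taken in the references the proposition cites (in particular Bertsimas--Gupta--Kallus,~\cite{datad_robust}): confine the randomness to the bootstrap ``good event'' $\{\|r_0-\hat r_c\|_2\le\Gamma_1^B,\ \|\Sigma_0-\hat\Sigma\|_F\le\Gamma_2^B\}$, separate the violating superlevel set from the ellipsoid by a hyperplane, and close with the one-sided Chebyshev/Cantelli bound calibrated by $\rho^2=(1-\epsilon)/\epsilon$. The monotonicity of $\sigma^2\mapsto\sigma^2/(\sigma^2+t^2)$ that you invoke to pass from the true variance $a^T\Sigma_0a$ to $\|Ca\|_2^2$ is the right observation, and your handling of the nonnegativity constraint $r_c\ge\mathbf 0$ (separate against the full ellipsoid $\mathcal E$, not the intersection) is clean.

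There is nothing to compare against in the paper itself: the proposition is stated with a citation and no proof, so the ``paper's own proof'' is precisely to import the result from~\cite{datad_robust,SC_2003}. Your write-up simply unpacks that import. Two minor remarks worth tightening if you flesh this out: (i) the separation step only needs the weak form (disjoint convex sets, one compact), which already gives $b\ge\max_{\mathcal E}a^Tr_c$; (ii) the Cantelli step uses $\|Ca\|_2>0$, which holds because $C^TC=\hat\Sigma+\Gamma_2^B\mathbf I\succ0$ whenever $\Gamma_2^B>0$, so $C$ is nonsingular and $a\neq\mathbf 0$ suffices.
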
 
When one component of $r_c$ increases or decreases, we have an intuition how it affects the value of other components of $r_c$ by the expression~\eqref{u_cs}.

\subsection{Algorithm}
\label{algorithm_sec}
With a threshold of the test statistics calculated via the given dataset, we then apply the formula~\eqref{UM_box} for constructing a box type of uncertainty set, and the formula~\eqref{u_cs} for an SOC type of uncertainty set, respectively. The following Algorithm~\ref{alg: algorithm_1} describes the complete process for constructing uncertain demand sets based on the original dataset.

\begin{algorithm}
\caption{Constructing uncertain demand sets}
\textbf{Input: A dataset of taxi operational records}\\
\textbf{1. Demand aggregating and sample set partition}\\
Aggregate demand to get a sample set $\mathcal{S}$ of the random demand vector $r_c$ from the original dataset. Partition the sample set $\mathcal{S}$ and denote a subset $\mathcal{S}(t, I_p)\subset \mathcal{S}$, $p=1,\dots,P$ as the subset partitioned  for each time index $t$ according to either prior knowledge or categorical information $I_p$. 
\\\textbf{2. Bootstrapping thresholds for test statistics}
\\\textbf{for} each subset $\mathcal{S}(t, I_p)$\ \textbf{do}
\\ \textbf{Initialization:} { Testing statistics $T$, a null-hypothesis $H_0$, the probabilistic guarantee level $\epsilon$, a significance level $0 < \alpha_h <1$, the number of bootstrap time $N_B \in \mathbb{Z}_+$.}
\\\textbf{Estimate the mean $\hat{r}_c(t,I_p)$ and covariance $\hat{\Sigma}(t,I_p)$ for vector $r_c$ based on subset $\mathcal{S}(t, I_p)$.} 
\\ \quad \textbf{for}\ $j=1,\dots, N_B$\ \textbf{do}
\\ \quad\quad (1). Re-sample $\mathcal{S}^j(t,I_p)=\{\tilde{r}_c (d_1, t,I_p),\dots,\tilde{r}_c(d_{N_B}, t,I_p)\}$ data points from $\mathcal{S}(t,I_p)$ with replacement for each $t$.
\\ \quad\quad (2). Get the value of the test statistics based on $\mathcal{S}^j(t,I_p)$. 
\\ \quad \textbf{end for}
\\ \quad (3). Get the thresholds of the $\alpha$ significance level for $H_0$.
\\ \quad \textbf{end for}
\\\textbf{3. Calculate the model of uncertainty sets}\\
Get the box type and the SOC type of uncertainty sets according to~\eqref{UM_box} and~\eqref{u_cs}, respectively, for each $t$ and $I_p$. 
\textbf{Output: Uncertainty sets for problem~\eqref{Tm}}
\label{alg: algorithm_1}
\end{algorithm}

We do not restrict the method of estimating  mean $\hat{r}_c(t,I_p)$ and covariance $\hat{\Sigma}(t,I_p)$ matrices of a subset $\mathcal{S}(t, I_p)$ in step $2$,  and bootstrap is one method. For step $2.(2)$, the process for the box type of uncertainty sets is: calculate index $s$ that satisfies~\eqref{s} with the given $\epsilon$, sort each component of sampled vectors $r_c(d_l, t, I_p)$, and get the order statistics $r_{c,i}^{(N_B-s+1)}(j,t, I_p)$, $r_{c,i}^{(s)}(j,t,I_p)$ of the $j$-th sample set $\mathcal{S}^j(t, I_p)$. For the SOC type, we calculate the mean and covariance of the samples of the vector according to the subset $\mathcal{S}^j(t,I_p)$ as $\hat{r}_c (j,t,I_p)$ and $\hat{\Sigma}(j,t,I_p)$, respectively.

In step $2.(3)$, the $\alpha_h$ level thresholds for the box type of uncertainty sets are the $\lceil N_B(1-\alpha_h)\rceil$-th largest value of the upper bound $r_{c,i}^{(s)}(j,t,I_p)$ and the $\lceil N_B \alpha_h \rceil$-th largest value of the lower bound $r_{c,i}^{(N_B-s+1)}(j,t,I_p)$ for the $i$-th component. For the SOC type of uncertainty sets, we calculate the mean and covariance of $r_c(t, I_p)$ for the $N_B$ times bootstrap as $\hat{r}_c(t,I_p)$ and $\hat{\Sigma}(t,I_p)$, and get $\Gamma_1(j,t,I_p)=\|\hat{r}_c(j,t,I_p)-\hat{r}_c(t,I_p)\|_2$, $\Gamma_2(j,t,I_p)=\|\hat{\Sigma}(j,t,I_p)-\hat{\Sigma}(t,I_p)\|_2.$ Denote the $\lceil N_B(1-\alpha_h)\rceil$-th largest value of $\Gamma_1(j,t,I_p)$ and $\Gamma_2(j,t,I_p)$as $\Gamma_1^B (t,I_p)$ and $\Gamma_2^B(t,I_p)$, respectively. 

In summary, to construct a spatial-temporal uncertain demand model for problem~\eqref{Tm}, in this section, we consider the taxi operational record of each day as one independent and identically distributed (i.i.d.) sample for the concatenated demand vector $r_c$. By partitioning the entire dataset to several subsets according to categorical information such as weekdays and weekends, we are able to build uncertainty sets for each subset of data without additional assumptions about the true distribution of the spatial-temporal demand profile. Then we design Algorithm~\ref{alg: algorithm_1} to construct a box type and an SOC type of uncertainty sets based on data that provide a desired probabilistic guarantee of robust solutions.

\section{Computationally Tractable Formulations}
\label{sec:algorithm}
We build equivalent computationally tractable formulations of problem~\eqref{Tm} with different definitions of uncertain sets calculated by Algorithm~\ref{alg: algorithm_1} in this section. Hence, the robust taxi dispatch problem considered in this work can be solved efficiently. Computational tractability of a robust linear programming problem for ellipsoid uncertainty sets is discussed in~\cite{robustconvex}. The process is to reformulate constraints of the original problem to its equivalent convex constraints that must hold given the uncertainty set. The objective function of problem~\eqref{Tm} is concave of the uncertain parameters $r^k$, convex of the decision variables $X^k, L^k$ with the decision variables on the denominators, not standard forms of linear programming (LP) or semi-definite programming (SDP) problems that already covered by previous work~\cite{robustconvex, datad_robust}. Hence, we prove one equivalent computationally tractable form of problem~\eqref{Tm} for each uncertainty set constructed in Section~\ref{sec:algorithm1}.


Only the $J_E$ components of objective functions in~\eqref{Tm} include uncertain parameters, and the decision variables of the function are in the denominator of the function $J_E$. 
The box type uncertainty set defined as~\eqref{UM_box} is a special form of polytope, hence, we first prove an equivalent standard form of convex optimization problem for~\eqref{Tm} for a polytope uncertainty set as the following.
\begin{theorem} (Next step dispatch)
If the uncertainty set of problem~\eqref{Tm} when $\tau=1$ is defined as the non-empty polytope $\Delta : =\{ r \geq 0, Ar \leq b\}$, and we omit the superscripts $k$ for variables and parameters without confusion. Then problem~\eqref{Tm} with $\tau=1$ is equivalent to the following convex optimization problem
 \begin{align}
\begin{split}
\underset{X\geq 0,\lambda \geq 0}{\text{min}} \quad &\sum_{i=1}^{n} \sum_{j=1}^{n}X_{ij} W_{ij}+b^T\lambda \\
\text{s.t.}\quad\quad &A^T\lambda-\beta\begin{bmatrix} \frac{1}{\left (\sum\limits_{j=1}^{n} X_{j1}-\sum\limits_{j=1}^{n} X_{1j} + L_1 \right)^\alpha} \\ \vdots \\ \frac{1}{\left (\sum\limits_{j=1}^{n} X_{jn}-\sum\limits_{j=1}^{n} X_{nj} + L_n \right)^\alpha}\end{bmatrix}    \geq 0,\\
&\mathbf{1}^T_n X-X \mathbf{1}_n +L^T \geqslant 1,\\
&  X_{ij} W_{ij} \leq m X_{ij},\\
& X_{ij}\geq 0,\quad \forall i, j \in \{1, \dots, n\}.
\end{split}
\label{conv_T1}
\end{align} 
 \label{T1_convex}
\end{theorem}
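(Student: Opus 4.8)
The plan is to exploit that, once the dispatch matrix $X$ is fixed, problem~\eqref{Tm} with $\tau=1$ is a \emph{linear} program in the uncertain demand $r$, so the inner maximization can be replaced by its LP dual, and the outer minimization over $X$ can then be fused with the dual minimization over $\lambda$ to yield~\eqref{conv_T1}.

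\textbf{Step 1 (isolate the $r$-dependent part).} Omitting the superscript $k$, write $d_i(X):=\sum_{j}X_{ji}-\sum_{j}X_{ij}+L_i$ for the post-dispatch supply in region $i$. The term $\sum_{i,j}X_{ij}W_{ij}$ of the objective does not involve $r$, so for every feasible $X$ the inner problem is $\max_{r\in\Delta}\ \sum_{i}\beta r_i\, d_i(X)^{-\alpha}$. Constraint~\eqref{total_i} forces $d_i(X)>0$ on the feasible set, so the coefficients $c_i(X):=\beta\, d_i(X)^{-\alpha}$ are well defined and strictly positive, and the inner problem is the linear program $\max\{c(X)^{T}r:\ Ar\le b,\ r\ge 0\}$ over the nonempty polytope $\Delta=\{r\ge 0,\ Ar\le b\}$.

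\textbf{Step 2 (LP duality and fusion of the minimizations).} Since $\Delta$ is nonempty and bounded, this LP is feasible with finite optimum, so strong LP duality gives
\begin{align}
\max_{r\in\Delta}\ c(X)^{T}r \;=\; \min_{\lambda\ge 0,\ A^{T}\lambda\ge c(X)}\ b^{T}\lambda ,
\label{plan_dual}
\end{align}
with the dual optimum attained for every feasible $X$. Substituting this identity into~\eqref{Tm} turns the problem into $\min_{X}\big[\sum_{i,j}X_{ij}W_{ij}+\min_{\lambda\ge 0,\,A^{T}\lambda\ge c(X)}b^{T}\lambda\big]$ subject to~\eqref{total_i} and~\eqref{idle_upper}. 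Using $\inf_{X}\inf_{\lambda}=\inf_{(X,\lambda)}$, this is the single joint minimization over $(X,\lambda)$, $\lambda\ge 0$, with the coupling constraint $A^{T}\lambda-\beta\big[d_1(X)^{-\alpha},\dots,d_n(X)^{-\alpha}\big]^{T}\ge 0$ together with the original constraints on $X$, i.e.\ exactly~\eqref{conv_T1}. Conversely, the $X$-component of any minimizer of~\eqref{conv_T1} is optimal for~\eqref{Tm}, which establishes the claimed equivalence.

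\textbf{Step 3 (convexity check) and the main obstacle.} It remains to verify that~\eqref{conv_T1} is a genuine convex program: the objective is linear in $(X,\lambda)$; each $d_i(X)$ is affine and strictly positive on the feasible set and $t\mapsto t^{-\alpha}$ is convex on $\{t>0\}$ for $\alpha>0$, so $d_i(X)^{-\alpha}$ is convex in $X$ and $\beta d_i(X)^{-\alpha}-(A^{T}\lambda)_i$ is convex in $(X,\lambda)$; the remaining constraints~\eqref{total_i} and~\eqref{idle_upper} are linear. The one delicate point is the appeal to strong duality in~\eqref{plan_dual}: one must confirm the inner LP is feasible and bounded for every feasible $X$, which follows from $\Delta$ being a nonempty polytope (equivalently, $r\ge 0$ together with $c(X)>0$ forces the maximizer to a vertex of the bounded set $\Delta$), so that there is no duality gap and the dual minimum is attained; the rest --- fusing the two infima and checking convexity of the $d_i(X)^{-\alpha}$ terms --- is routine.
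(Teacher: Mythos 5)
Your proposal is correct and follows essentially the same route as the paper: replace the inner maximization over the polytope $\Delta$ by its LP dual (the paper derives this via an explicit Lagrangian with a slack multiplier $v$ and invokes refined Slater's condition for affine constraints, which is the same strong-duality fact you cite), fuse the two minimizations over $X$ and $\lambda$, and verify convexity of $d_i(X)^{-\alpha}$ as the composition of $t\mapsto t^{-\alpha}$ with an affine, positive map. The only cosmetic difference is that you appeal to standard LP duality directly rather than writing out the Lagrangian, which changes nothing substantive.
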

\begin{proof}
See Appendix~\ref{appendix_T1}.
\end{proof}

To directly use the demand uncertainty set that describes the spatial-temporal correlation of $(r^1, \dots, r^{\tau})$ like~\eqref{UM_box} and~\eqref{u_cs} for the concatenated demand $r_c$ in problem~\eqref{Tm}, we first consider to group the maximization over each $r^k$ together to save the process of projection $r_c \in \Delta$ for individual $r^k \in \Delta_k$. Furthermore, we can find the dual (a minimizing problem) of the maximizing cost problem over $r_c \in \Delta$, and then numerically efficiently solve~\eqref{Tm} that minimizes the total cost during time $(1,2,\dots, \tau)$ under uncertain demand $r_c$. Hence, we first prove that the minimax equality holds for the maximin problem over each pair of $k$ and $k+1$ for problem~\eqref{Tm}, and~\eqref{Tm} is equivalent to the robust optimization problem shown in the following lemma. 
\begin{lemma}(Minimax equality)
Given the assumption that the definition of the uncertainty sets $r_c \in \Delta$ and $r^k\in \Delta_k$ are compact (closed and convex), the robust dispatch problem~\eqref{Tm} is equivalent to the following robust dispatch problem
\begin{align}
\begin{split}
\underset{X^{1:\tau}, L^{2:\tau}}{\text{min}}\ \underset{r_c\in \Delta}{\text{max}}\quad &J =\sum_{k=1}^{\tau} (J_D(X^k)+\beta J_E(X^k,r^k))\\
\text{s.t.}\quad &\text{constraints of~\eqref{Tm}}, k=1,\dots, \tau.
\end{split}
\label{Tm_minimax}
\end{align}
\label{lemma_minimax}
\end{lemma}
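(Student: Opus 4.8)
The proof rests on the block-separable structure of the cost. Write $J=\sum_{k=1}^{\tau}T_k$ with $T_k(X^{1:k},r^k)=J_D(X^k)+\beta J_E(X^k,r^k)$, and record: the uncertain block $r^k$ enters $J$ only through the single term $T_k$; by~\eqref{JE}, $J_E(X^k,\cdot)$ is affine (hence concave) in $r^k$, since for fixed decision variables the denominators are fixed positive numbers (they are $\ge 1$ on the feasible set of~\eqref{Tm}); each $T_k$ is convex in the decision variables, because $J_D$ is linear, $J_E$ is convex in $(X^k,L^k)$ (a nonnegative combination of the convex maps $(\,\cdot\,)^{-\alpha}$ composed with the positive affine denominators), and the recursion~\eqref{Lk} lets us substitute $L^k$ by an affine function of $X^{1:k-1}$ without losing convexity; and the constraints~\eqref{idle_upper},~\eqref{total_i},~\eqref{Lk} define a polyhedron, hence a convex feasible set.

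\textbf{Peeling the nested min--max.} I would process the nesting from the inside out. Because $r^{\tau}$ occurs only in $T_{\tau}$, the innermost operation gives $\max_{r^{\tau}\in\Delta_{\tau}}J=\sum_{k<\tau}T_k+J_D(X^{\tau})+\beta\max_{r^{\tau}\in\Delta_{\tau}}J_E(X^{\tau},r^{\tau})$, in which the new term is a pointwise maximum of functions convex in $X^{\tau}$, hence convex; the subsequent $\min_{X^{\tau},L^{\tau}}$ does not involve $r^{\tau-1}$, so the same reduction applies at level $\tau-1$, and so on down to level $1$. Equivalently, each interchange can be justified by a convex--concave minimax theorem (Sion's theorem), whose hypotheses hold at every level: $\Delta_k$ is compact and convex, the decision set is convex, and the value function passed up from the inner levels is convex in the decision variables, affine in the current $r^k$, finite and continuous. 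Since nested minimizations combine, problem~\eqref{Tm} equals
\[
\underset{X^{1:\tau},L^{2:\tau}}{\min}\ \sum_{k=1}^{\tau}\Big(J_D(X^k)+\beta\,\underset{r^{k}\in\Delta_{k}}{\max}\,J_E(X^k,r^k)\Big)
\]
subject to the constraints of~\eqref{Tm}.

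\textbf{Recombining the maximizations.} The remaining task is to prove, for every feasible $X^{1:\tau}$, that
\[
\sum_{k=1}^{\tau}\ \underset{r^{k}\in\Delta_{k}}{\max}\,J_E(X^k,r^k)\;=\;\underset{r_c\in\Delta}{\max}\ \sum_{k=1}^{\tau}J_E(X^k,r^k),
\]
the right side being the objective of~\eqref{Tm_minimax} once the (identical) constraints are reattached. The inequality ``$\ge$'' is immediate since any $r_c\in\Delta$ has $k$-th block in $\Delta_k$. For ``$\le$'' one uses that $\Delta_k$ is exactly the coordinate projection of $\Delta$ and that $J_E(X^k,\cdot)$ is linear, assembling the block-wise maximizers $\bar r^k\in\Delta_k$ into a single $r_c^{*}\in\Delta$ that attains all $\tau$ marginal maxima simultaneously, so that $\sum_k J_E(X^k,\bar r^k)=\sum_k J_E(X^k,(r_c^{*})^k)\le\max_{r_c\in\Delta}\sum_k J_E$.

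\textbf{Main obstacle.} I expect this last recombination to be the delicate step. A positive linear functional has only a subadditive support function, so for a generic compact convex $\Delta$ the maximizers over the projections $\Delta_1,\dots,\Delta_\tau$ need not be jointly feasible in $\Delta$; it is transparent when $\Delta$ equals the Cartesian product of its projections --- which holds for the box set~\eqref{UM_box} --- but for the SOC set~\eqref{u_cs} one should check simultaneous realizability directly, or argue the equivalence for that specific set. The remaining work is routine bookkeeping: verifying that after substituting~\eqref{Lk} and carrying out each partial minimization, the intermediate value functions stay convex in the decision variables, concave in the current $r^k$, finite and continuous, so that the minimax interchange is legitimate at every stage.
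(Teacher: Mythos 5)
Your route to the interchange step is genuinely different from, and more elementary than, the paper's. The paper swaps each adjacent pair $\max_{r^{k}}\min_{X^{k+1},L^{k+1}}$ by invoking a saddle-point existence result for convex--concave functions on compact sets and repeats this backwards from $k=\tau-1$ to $k=1$; you instead observe that $r^{k}$ enters $J$ only through the single term $T_k$, which involves no decision variable of stage $k+1$ or later, and that the feasible set does not depend on $r$, so each interchange reduces to the trivial identity $\max_{r}\bigl[f(r)+\min_{x}g(x)\bigr]=\min_{x}\max_{r}\bigl[f(r)+g(x)\bigr]$ for separable objectives. That is correct (the denominators are at least $1$ on the feasible set, so everything is finite), and it avoids the compactness and continuity bookkeeping entirely. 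Both arguments arrive at the same intermediate problem $\min_{X^{1:\tau},L^{2:\tau}}\max_{r^{1}\in\Delta_1,\dots,r^{\tau}\in\Delta_{\tau}}J$.

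The obstacle you flag at the recombination step is a genuine gap --- and it is present in the paper's own proof as well, whose last line simply asserts $\max_{r^{1}\in\Delta_1,\dots,r^{\tau}\in\Delta_{\tau}}J=\max_{r_c\in\Delta}J$ without justification. Since each $\Delta_k$ is defined as the unconditional projection of $\Delta$, the left-hand maximization ranges over $\Delta_1\times\cdots\times\Delta_{\tau}\supseteq\Delta$, and because $J$ is a positive linear functional of $r_c$ for fixed decisions, the two maxima in general coincide only when the componentwise supremum of $\Delta$ is itself a point of $\Delta$. This holds for the box set~\eqref{UM_box}, which is a Cartesian product, but fails for a coupled polytope such as~\eqref{delta} or for the SOC set~\eqref{u_cs}: already for $\Delta=\{r\geq 0:\ r^1+r^2\leq 1\}$ one has $\Delta_1\times\Delta_2=[0,1]^2\neq\Delta$ and $\max_{\Delta_1\times\Delta_2}(c_1r^1+c_2r^2)=c_1+c_2>\max_{\Delta}(c_1r^1+c_2r^2)$ for $c_1,c_2>0$. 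So your proposal does not close the lemma for a general compact convex $\Delta$ --- but neither does the paper. What both interchange arguments actually establish is equivalence of~\eqref{Tm} with the min--max over the product $\Delta_1\times\cdots\times\Delta_{\tau}$; equating that with the max over $\Delta$ requires either restricting to product-type sets or reinterpreting the inner maximizations in~\eqref{Tm} as ranging over the slices of $\Delta$ conditioned on the already-revealed $r^1,\dots,r^{k-1}$, and in the latter case the recombination needs its own argument. Your instinct to verify simultaneous realizability for the specific sets used downstream is exactly the right repair.
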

\begin{proof}
See Appendix~\ref{appendix_minimax}.
\end{proof}

For the robust optimization problem~\eqref{Tm}, the computationally tractable convex form depends on the definition of uncertainty sets.When conditions of Lemma~\ref{lemma_minimax} hold, equivalent convex optimization forms of problem~\eqref{Tm} are derived based on problem~\eqref{Tm_minimax}. For a multi-stage robust optimization problem that restricts the near-optimal control input of linear dynamical systems to be a certain degree of polynomial of previous observed uncertainties, an approximated semidefinite programming method for calculating the time dependent control input is proposed in~\cite{multistage}. The method does not require minimax equality holds for the robust optimal control problem.

The box type uncertainty set~\eqref{UM_box} is a special form of polytope, that the uncertain demand model during different time of a day is described separately. The process of converting problem~\eqref{Tm} to an equivalent computationally tractable convex form is similar to that of the one-stage robust optimization problem. The result is described as the following lemma.
\begin{lemma}
If the uncertain set for $r^k, k=1,\dots, \tau$ describes each demand vector $r^k$ separately as a non-empty polytope with the form
\begin{align}
\Delta_k: =\{ r^k \geq 0, A_k r^k \leq b_k\},\quad k=1,\dots, \tau,
\label{polytope}
\end{align} 
problem~\eqref{Tm} is equivalent to the following convex optimization problem
\begin{align}
\begin{split}
\underset{X^k,\lambda^k, L^k\geq 0}{\text{min}} \quad& \sum_{k=1}^{\tau}(\sum_{i}^n \sum_{j}^n X^k_{ij} W_{ij}+b_k^T\lambda^k )\\
\text{s.t.}\quad\quad 
&A_k^T\lambda^k-\beta\begin{bmatrix} \frac{1}{\left (\sum\limits_{j=1}^{n} X^k_{j1}-\sum\limits_{j=1}^{n} X^k_{1j} + L^k_1 \right)^{\alpha}} \\ \vdots \\ \frac{1}{\left (\sum\limits_{j=1}^{n} X^k_{jn}-\sum\limits_{j=1}^{n} X^k_{nj} + L^k_n \right)^{\alpha}}\end{bmatrix}    \geq 0, \\
&\text{constraints of~\eqref{Tm}}, k=1,\dots, \tau.
\end{split}
\label{conv_Tm}
\end{align}
\label{lemma_Tm_poly}
\end{lemma}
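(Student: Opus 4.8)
The plan is to extend the single-stage argument of Theorem~\ref{T1_convex} to the multi-stage robust dispatch problem by working from the equivalent min-max form~\eqref{Tm_minimax} provided by Lemma~\ref{lemma_minimax}. Under the polytope assumption~\eqref{polytope}, the uncertainty set $\Delta$ for the concatenated vector $r_c$ decomposes block-diagonally: $r_c \in \Delta$ iff $r^k \in \Delta_k$ for each $k$, i.e. $A_k r^k \le b_k$, $r^k \ge 0$, independently across $k$. Since the objective $J = \sum_{k=1}^\tau (J_D(X^k) + \beta J_E(X^k, r^k))$ is separable in $r^k$ and only the $J_E$ terms involve the uncertainty, the inner maximization $\max_{r_c \in \Delta} J$ splits into $\tau$ independent linear programs, one per time slot:
\[
\max_{r^k \ge 0,\, A_k r^k \le b_k}\ \sum_{i=1}^n \frac{\beta\, r^k_i}{\left(\sum_{j} X^k_{ji} - \sum_j X^k_{ij} + L^k_i\right)^{\alpha}}.
\]

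First I would fix the decision variables $X^{1:\tau}, L^{2:\tau}$ (which, by the constraints~\eqref{total_i}, make each denominator strictly positive, so the coefficient vector of the inner LP is well-defined and finite) and treat each inner problem as a linear program in $r^k$ with cost vector $c^k$ whose $i$-th entry is $\beta / (\sum_j X^k_{ji} - \sum_j X^k_{ij} + L^k_i)^{\alpha}$. Next I would invoke LP strong duality: since $\Delta_k$ is a non-empty polytope, the primal $\max\{(c^k)^T r^k : A_k r^k \le b_k,\ r^k \ge 0\}$ equals the dual $\min\{b_k^T \lambda^k : A_k^T \lambda^k \ge c^k,\ \lambda^k \ge 0\}$, and the dual is feasible/bounded exactly when the primal is. Substituting the dual value back into~\eqref{Tm_minimax} turns the outer $\min$ over $X,L$ and the (now dualized) inner problem into a single joint minimization over $X^k, L^k, \lambda^k$, yielding precisely~\eqref{conv_Tm}: the objective becomes $\sum_k (\sum_{i,j} X^k_{ij} W_{ij} + b_k^T \lambda^k)$, and the dual feasibility constraint $A_k^T \lambda^k \ge c^k$ is exactly the displayed vector inequality $A_k^T \lambda^k - \beta [\,1/(\cdots)^\alpha\,]_i \ge 0$. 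Finally I would check convexity of the resulting program: the $W$-weighted and $b_k^T\lambda^k$ terms are linear; the constraint functions $1/(\ell^k_i)^{\alpha}$ are convex in the affine expression $\ell^k_i = \sum_j X^k_{ji} - \sum_j X^k_{ij} + L^k_i$ (for $\alpha > 0$ on the positive orthant, $t \mapsto t^{-\alpha}$ is convex), so $A_k^T \lambda^k - \beta[\,\cdots\,] \ge 0$ defines a convex feasible region; the remaining constraints of~\eqref{Tm} are linear and the dynamics~\eqref{Lk} are affine equalities; hence~\eqref{conv_Tm} is a genuine convex program.

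The main obstacle is the same subtlety that makes the $\tau = 1$ case nontrivial: legitimizing the exchange of $\min_{X,L}$ and $\max_{r_c}$ and then folding the inner dual minimization into the outer minimization. The interchange itself is handed to us by Lemma~\ref{lemma_minimax}, so the real work is (i) verifying that the inner LPs are feasible and have finite optimal value for every feasible $(X^{1:\tau}, L^{2:\tau})$ — which follows because $\Delta_k$ is non-empty (primal feasible) and bounded (primal bounded), so by strong duality the dual is feasible and attains the same value — and (ii) arguing that replacing $\max_{r^k} (c^k)^T r^k$ by $\min_{\lambda^k \ge 0,\, A_k^T\lambda^k \ge c^k} b_k^T \lambda^k$ inside an outer $\min$ is valid, i.e. that $\min_{X,L} \max_{r_c} = \min_{X,L} \min_{\lambda} = \min_{X,L,\lambda}$; this is routine since a nested minimization over a jointly-feasible set collapses to a joint minimization. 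I would also remark that~\eqref{conv_Tm} recovers~\eqref{conv_T1} when $\tau = 1$, so this lemma is the natural multi-stage generalization of Theorem~\ref{T1_convex}, and note that boundedness of $\Delta_k$ (guaranteed by the construction in Proposition~\ref{theorem_7}, where $\mathcal{U}^M_\epsilon$ is a box) is what rules out an unbounded inner objective.
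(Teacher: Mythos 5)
Your proposal is correct and follows essentially the same route as the paper: invoke Lemma~\ref{lemma_minimax} to pass to the min--max form~\eqref{Tm_minimax}, exploit separability of the inner maximization across $k$, dualize each stage's linear maximization over the polytope $\Delta_k$ (the paper does this via a joint Lagrangian with per-stage multipliers $\lambda^k, v^k$ and refined Slater's condition for affine constraints, which is the same strong-duality fact you obtain from LP duality), and fold the resulting dual minimizations into the outer minimization. Your added remarks on strict positivity of the denominators and on recovering~\eqref{conv_T1} at $\tau=1$ are consistent with the paper's argument.
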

\begin{proof}
See Appendix~\ref{appendix_Tm_poly1}.
\end{proof}

For a more general case that the uncertainty sets for $r^1, \dots, r^{\tau}$ are temporally correlated, the following theorem and proof describe the equivalent computationally tractable convex form of~\eqref{Tm}. 
\begin{theorem}
When $\Delta$ is defined as the following non-empty polytope set
\begin{align}
\Delta : =\{(\Delta_1, \dots,\Delta_{\tau}) | A_1 r^1+\dots + A_{\tau} r^{\tau} \leq b, r^k \geq 0\},
\label{delta}
\end{align}
problem~\eqref{Tm} is equivalent to the following convex optimization problem 
\begin{align}
\begin{split}
\underset{X^k, L^k, \lambda \geq 0}{\text{min}} \quad&\sum_{k=1}^{\tau}(\sum_{i}^n \sum_{j}^n X^k_{ij} W_{ij})+b^T\lambda \\
\text{s.t.}\quad &A_k^T\lambda-\beta\begin{bmatrix} \frac{1}{\left (\sum\limits_{j=1}^{n} X^k_{j1}-\sum\limits_{j=1}^{n} X^k_{1j} + L^k_1 \right)^{\alpha}} \\ \vdots \\ \frac{1}{\left (\sum\limits_{j=1}^{n} X^k_{jn}-\sum\limits_{j=1}^{n} X^k_{nj} + L^k_n \right)^{\alpha}}\end{bmatrix}    \geq 0,\\
&\text{constraints of~\eqref{Tm}},\quad k=1,\dots, \tau.
\end{split}
\label{conv_Tm_dep}
\end{align}
\label{Tm_poly}
\end{theorem}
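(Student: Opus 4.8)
The plan is to follow the template of the proof of Lemma~\ref{lemma_Tm_poly}, with one structural change: because the polytope~\eqref{delta} couples $r^1,\dots,r^\tau$ through the single inequality $A_1r^1+\cdots+A_\tau r^\tau\le b$, dualizing the inner maximization produces a \emph{single} multiplier $\lambda$ shared across all $\tau$ stages rather than one multiplier per stage. The first step is to invoke Lemma~\ref{lemma_minimax}: since $\Delta$ in~\eqref{delta} is a nonempty polytope (hence compact and convex) and each projection $\Delta_k$ is then also compact and convex, problem~\eqref{Tm} is equivalent to the collapsed form~\eqref{Tm_minimax}, namely $\min_{X^{1:\tau},L^{2:\tau}}\max_{r_c\in\Delta}\sum_{k=1}^{\tau}\bigl(J_D(X^k)+\beta J_E(X^k,r^k)\bigr)$ subject to the constraints of~\eqref{Tm}.

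Next I would isolate the inner problem. The driving-distance term $J_D(X^k)=\sum_{i,j}X^k_{ij}W_{ij}$ carries no uncertainty, so it can be pulled outside the maximization over $r_c$. Writing $g^k_i:=\sum_jX^k_{ji}-\sum_jX^k_{ij}+L^k_i$, which is affine in $(X^k,L^k)$ and strictly positive on the feasible set by the constraint $\mathbf 1_n^TX^k-X^k\mathbf 1_n+(L^k)^T\ge\mathbf 1_n^T$ of~\eqref{Tm}, and setting $c^k_i:=\beta/(g^k_i)^\alpha$, the quantity $\beta\sum_kJ_E(X^k,r^k)=\sum_k\sum_i c^k_i r^k_i$ is linear in the concatenated vector $r_c$. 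Hence, for each fixed feasible $(X^{1:\tau},L^{2:\tau})$, the inner maximization is the linear program $\max\{\,c_c^Tr_c:\ A r_c\le b,\ r_c\ge 0\,\}$, where $A:=[\,A_1\mid\cdots\mid A_\tau\,]$ so that $Ar_c=\sum_kA_kr^k$, and $c_c$ is the concatenation of the vectors $c^k$.

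Since $\Delta$ is a nonempty polytope, this LP is feasible with finite optimum, so strong LP duality yields $\max_{r_c\in\Delta}c_c^Tr_c=\min\{\,b^T\lambda:\ A^T\lambda\ge c_c,\ \lambda\ge 0\,\}$, and the inequality $A^T\lambda\ge c_c$ splits block-wise into $A_k^T\lambda-\beta\,[(g^k_1)^{-\alpha},\dots,(g^k_n)^{-\alpha}]^T\ge 0$ for $k=1,\dots,\tau$. Substituting this minimization back into the outer minimization over $(X^{1:\tau},L^{2:\tau})$ and merging the two ``$\min$'' operators over the joint variable $(X^{1:\tau},L^{2:\tau},\lambda)$ turns~\eqref{Tm_minimax} into exactly~\eqref{conv_Tm_dep}: the objective becomes $\sum_k\sum_{i,j}X^k_{ij}W_{ij}+b^T\lambda$, and the constraint list is the original constraints of~\eqref{Tm} together with these $\tau$ dual-feasibility inequalities and $\lambda\ge 0$.

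Finally I would verify that~\eqref{conv_Tm_dep} is genuinely convex: the objective is linear in $(X,L,\lambda)$; each dual-feasibility constraint has the form ``affine in $\lambda$ $\ge$ convex in $(X,L)$'', since $t\mapsto t^{-\alpha}$ is convex and decreasing on $(0,\infty)$ while every $g^k_i$ is affine and positive, so the associated feasible region is convex; and the remaining constraints carried over from~\eqref{Tm} are affine. The step I expect to be the main obstacle is the justification for applying strong LP duality \emph{inside} the min--max and for fusing the resulting two minimizations into one joint minimization — this is precisely why the argument must first pass through Lemma~\ref{lemma_minimax} (to obtain a single outer max over $r_c$) and must use the positivity constraint of~\eqref{Tm} (so that $c_c$ is finite and the primal LP is well posed); everything else is bookkeeping parallel to the proof of Lemma~\ref{lemma_Tm_poly}.
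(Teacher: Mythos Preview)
Your proposal is correct and follows essentially the same route as the paper: invoke Lemma~\ref{lemma_minimax} to collapse~\eqref{Tm} into~\eqref{Tm_minimax}, dualize the inner linear maximization over $r_c\in\Delta$ with a single multiplier $\lambda$ (in contrast to the per-stage $\lambda^k$ of Lemma~\ref{lemma_Tm_poly}), and then merge the resulting minimization with the outer one. The paper's version is terser---it writes a single Lagrangian with multipliers $\lambda\ge 0$, $v_k\ge 0$ and appeals to the refined Slater condition for affine constraints rather than phrasing it as LP strong duality---but the content is the same, and your explicit block-splitting of $A^T\lambda\ge c_c$ into the $\tau$ constraints $A_k^T\lambda\ge c(X^k)$ and your convexity check are welcome elaborations that the paper leaves implicit.
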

\begin{proof}
See Appendix~\ref{appendix_Tm_poly2}.
\end{proof}

With an uncertain demand model defined as~\eqref{u_cs} for concatenated $r^1, \dots, r^{\tau}$, the following theorem derive the equivalent computationally tractable form of problem~\eqref{Tm}.
\begin{theorem}
When the uncertainty set for $r^1, \dots, r^{\tau}$ is defined as the SOC form of~\eqref{u_cs}, problem~\eqref{Tm} is equivalent to the following convex optimization problem~\eqref{u_cs_opt}.
\begin{align}
\begin{split}
\underset{X^k, L^k, z}{\text{min}} \quad&\sum_{k=1}^{\tau}\sum_{i}^n \sum_{j}^n X^k_{ij} W_{ij}\\
                                                                  &+\beta \left(\hat{r}^T_c z+\Gamma_1^B \|z\|_2 +\sqrt{\frac{1}{\epsilon}-1} \|Cz\|_2\right)\\                                       
\text{s.t.}\quad  & c_l(X) \leqslant z,\\                                 
&\text{constraints of~\eqref{Tm}},\quad k=1,\dots, \tau,
\end{split}
\label{u_cs_opt}
\end{align}
where $c_l(X) \in \mathbb{R}^{\tau n}$ is the concatenation of $c(X^1), \dots, c(X^{\tau})$.
\label{theorem_soc}
\end{theorem}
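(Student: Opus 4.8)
The plan is to collapse the nested robust problem~\eqref{Tm} into the single min--max form~\eqref{Tm_minimax} and then evaluate the inner maximization in closed form by conic duality. First I would invoke Lemma~\ref{lemma_minimax}: its hypothesis holds because $\mathcal{U}_{\epsilon}^{CS}(r_c)$ in~\eqref{u_cs} is the intersection of the closed nonnegative orthant with a Minkowski sum of a point, a Euclidean ball and an ellipsoid, hence nonempty, compact and convex, and so are its coordinate projections $\Delta_k$. Next I would note that the uncertain parameters enter the objective only through the $J_E$ terms: writing $c_i(X^k)=\bigl(\sum_{j} X^k_{ji}-\sum_{j} X^k_{ij}+L^k_i\bigr)^{-\alpha}$ and $c_l(X)=\bigl(c(X^1)^T,\dots,c(X^{\tau})^T\bigr)^T$, the objective of~\eqref{Tm_minimax} equals $\sum_{k}\sum_{i,j} X^k_{ij}W_{ij}+\beta\,c_l(X)^T r_c$, which is affine in $r_c$. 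Thus the inner problem for fixed feasible $(X^{1:\tau},L^{2:\tau})$ is the linear program $\max\{\beta\,c_l(X)^T r_c : r_c\in\mathcal{U}_{\epsilon}^{CS}(r_c)\}$.

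Then I would substitute the parametrization $r_c=\hat r_c+y+C^T w$ and rewrite the inner problem as maximizing $\beta\,c_l(X)^T(\hat r_c+y+C^T w)$ subject to $\|y\|_2\le\Gamma_1^B$, $\|w\|_2\le\rho$ with $\rho=\sqrt{(1-\epsilon)/\epsilon}$, and $\hat r_c+y+C^T w\ge\mathbf{0}$. The key move is to dualize \emph{only} the nonnegativity constraint with a multiplier $s\ge\mathbf{0}$: for fixed $s$ the remaining maximization over the two balls is explicit, since $\max_{\|y\|_2\le\Gamma_1^B}(c_l(X)+s)^T y=\Gamma_1^B\|c_l(X)+s\|_2$ and $\max_{\|w\|_2\le\rho}(c_l(X)+s)^T C^T w=\rho\,\|C(c_l(X)+s)\|_2$, so the dual becomes $\min_{s\ge\mathbf{0}}\beta\bigl[(c_l(X)+s)^T\hat r_c+\Gamma_1^B\|c_l(X)+s\|_2+\rho\,\|C(c_l(X)+s)\|_2\bigr]$; strong duality is available because this conic program satisfies Slater's condition (e.g.\ $y=w=\mathbf{0}$ is strictly feasible, the empirical mean $\hat r_c$ lying in the interior of the nonnegative orthant). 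Setting $z:=c_l(X)+s$, equivalently $z\ge c_l(X)$, the inner value equals $\min_{z\ge c_l(X)}\beta\bigl(\hat r_c^T z+\Gamma_1^B\|z\|_2+\sqrt{1/\epsilon-1}\,\|Cz\|_2\bigr)$, using $\sqrt{(1-\epsilon)/\epsilon}=\sqrt{1/\epsilon-1}$.

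Merging this minimization with the outer $\min$ over $X^{1:\tau},L^{2:\tau}$ and appending the (unchanged) constraints of~\eqref{Tm} yields precisely~\eqref{u_cs_opt} with $c_l(X)$ the concatenation of the $c(X^k)$. It remains to confirm~\eqref{u_cs_opt} is a bona fide convex program: on the feasible set the balance constraint forces $\sum_j X^k_{ji}-\sum_j X^k_{ij}+L^k_i\ge 1>0$, so each $c_i(X^k)$ is the composition of the convex decreasing map $t\mapsto t^{-\alpha}$ ($\alpha>0$) with an affine function of $(X^k,L^k)$, hence convex; therefore $\{(X,L,z):c_l(X)\le z\}$ is convex, the objective is linear in $z$ plus the convex norm terms $\|z\|_2$ and $\|Cz\|_2$, and the remaining constraints of~\eqref{Tm} are affine.

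I expect the main obstacle to be the inner-maximization step, and in particular recognizing that the auxiliary vector $z$ in~\eqref{u_cs_opt} is not an epigraph variable but the shifted Lagrange multiplier $c_l(X)+s$ of the nonnegativity constraint $r_c\ge\mathbf{0}$, together with carefully justifying strong duality for that conic subproblem (Slater's condition / boundedness of $\mathcal{U}_{\epsilon}^{CS}$). A secondary but indispensable ingredient is the convexity of the componentwise map $c_l$ on the feasible region, which is exactly where the standing assumptions $\alpha>0$ and $N^k\ge n$ (keeping the denominators bounded away from zero) are used.
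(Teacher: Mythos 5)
Your proposal is correct and follows essentially the same route as the paper's proof: reduce to the single min--max form via Lemma~\ref{lemma_minimax}, observe that the objective is affine in $r_c$ through $c_l(X)^Tr_c$, and take the Lagrange dual of the inner maximization over $\mathcal{U}_{\epsilon}^{CS}$ with Slater's condition justifying strong duality, the ball constraints on $y,w$ yielding the two norm terms. The only cosmetic difference is that you substitute out $r_c$ and dualize just the nonnegativity constraint (so $z=c_l(X)+s$), whereas the paper keeps $r_c$ and dualizes both the defining equality and $r_c\ge 0$ (so $z=c_l(X)+v$); the resulting dual and the identification $z\ge c_l(X)$ are identical.
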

\begin{proof}
See Appendix~\ref{appendix_soc}.
\end{proof}

It is worth noting that any optimal solution for problem~\eqref{nonr} has a special form between any pair of regions $(i, q)$.  
\begin{proposition}
Assume $X^{1*}, \dots, X^{\tau*}$ is an optimal solution of~\eqref{nonr}, then any $X^{k*}$ satisfies that for any pair of $(p,q)$, at least one value of the two elements $X^{k*}_{qi}$ and $X^{k*}_{iq}$ is $0$.
\end{proposition}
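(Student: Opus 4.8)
The plan is to argue by contradiction using a flow-cancellation perturbation. Suppose that for some stage $k$ and some ordered pair of distinct regions $(p,q)$ the optimal solution satisfies both $X^{k*}_{pq}>0$ and $X^{k*}_{qp}>0$. Set $\delta := \min\{X^{k*}_{pq},\, X^{k*}_{qp}\}>0$ and define a new candidate $\tilde X^{1:\tau}$ that agrees with $X^{1:\tau*}$ everywhere except $\tilde X^{k}_{pq} := X^{k*}_{pq}-\delta$ and $\tilde X^{k}_{qp} := X^{k*}_{qp}-\delta$, with $\tilde L^{2:\tau}$ recomputed from~\eqref{Lk}. The key observation is that, for every region $i$, the net-flow quantity $\sum_{j} X^{k}_{ji}-\sum_{j} X^{k}_{ij}+L^{k}_i$ is invariant under this perturbation: the pair $(p,q)$ contributes $X^{k}_{qp}-X^{k}_{pq}$ to region $p$'s balance and $X^{k}_{pq}-X^{k}_{qp}$ to region $q$'s balance, both unchanged when $X^{k}_{pq}$ and $X^{k}_{qp}$ are decreased by the same $\delta$, and for $i\notin\{p,q\}$ neither entry appears in row or column $i$.

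From this invariance I would deduce, in order: (i) constraint~\eqref{total_i} at stage $k$ still holds with strict inequality, since its left-hand side is exactly the unchanged net-flow quantity; (ii) by~\eqref{Lk}, $\tilde L^{k+1}=L^{(k+1)*}$, and since no later $X^{k'}$ was modified, inductively $\tilde L^{k'}=L^{k'*}$ for all $k'>k$, so every constraint~\eqref{total_i} and every recursion~\eqref{Lk} at later stages remains satisfied (for $k=1$ note that $L^1$ is a fixed datum, so this is untouched as well); (iii) the constraints~\eqref{idle_upper}, i.e.\ $X^{k}_{ij}W_{ij}\le m X^{k}_{ij}$ together with $X^{k}_{ij}\ge 0$, are preserved because $\tilde X^{k}_{pq},\tilde X^{k}_{qp}\ge 0$ by the choice of $\delta$ and these constraints are monotone in each nonnegative entry. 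Hence $\tilde X^{1:\tau},\tilde L^{2:\tau}$ is feasible for~\eqref{nonr}.

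It remains to compare objective values. Since $J_E(X^k,r^k)$ in~\eqref{JE} depends on $X^k$ only through the net-flow quantities, which are invariant, all $J_E$ terms, and all $J_D$ terms except the one at stage $k$, are unchanged; for the stage-$k$ idle-distance term, $J_D(\tilde X^{k})-J_D(X^{k*})=-\delta\,(W_{pq}+W_{qp})$ by~\eqref{JD}. Assuming, as in the model, that the distance between distinct regions is positive, so $W_{pq}+W_{qp}>0$, this difference is strictly negative, giving $J(\tilde X,\tilde L)<J(X^{*},L^{*})$ and contradicting optimality. Therefore at least one of $X^{k*}_{pq}$ and $X^{k*}_{qp}$ must be zero. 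The only genuine work is the bookkeeping of steps (i)--(iii) verifying that the perturbation leaves every feasibility condition, in particular the forward recursion~\eqref{Lk}, intact; I expect that, rather than the objective comparison, to be the main (though essentially routine) obstacle. One caveat worth flagging: if zero-distance pairs of distinct regions are permitted, the argument only yields the existence of an optimal solution with the stated property (start from any optimum and cancel all reverse flows), not that every optimum enjoys it.
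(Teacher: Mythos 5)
Your proposal is correct and follows essentially the same flow-cancellation argument as the paper: cancel the common amount on the pair of opposite entries, observe that the net-flow quantities (hence $J_E$, the recursion~\eqref{Lk}, and all constraints) are unchanged, and derive a strict decrease in $J_D$ to contradict optimality. Your explicit caveat that strictness requires $W_{pq}+W_{qp}>0$ is a point the paper's proof leaves implicit, but otherwise the two arguments coincide.
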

\begin{proof}
We prove by contradiction. Assume that one optimal solution has the form $X^{k}$ such that $X^{k}_{qi} >0$ and $X^{k}_{iq} >0$. Without loss of generality, we assume that $X^k_{qi} \geq X^k_{iq}$, and let 
\begin{align*}
X^{k*}_{qi} = X^k_{qi}-X^k_{iq},  X^{k*}_{iq}=0,
\end{align*} 
other elements of $X^{k*}$  equal to $X^k$. Then
\begin{align*}
 &\sum_{j=1}^n X^k_{ji}-\sum_{j=1}^n X^k_{ij} 
= X^k_{qi}-X^k_{iq}+\sum_{j \neq q} X^k_{ji} -\sum_{j \neq q} X^k_{ij}\\
= & X^{k*}_{qi} + 0+\sum_{j \neq q} X^{k*}_{ji} -\sum_{j \neq q} X^{k*}_{ij}
= \sum_{j=1}^n X^{k*}_{ji}-\sum_{j=1}^n X^{k*}_{ij},\\
&\sum_{j=1}^n X^k_{ji}-\sum_{j=1}^n X^k_{ij}+ L^k_i=\sum_{j=1}^n X^{k*}_{j i}-\sum_{j=1}^n X^{k*}_{ij}+ L^k_i.
\end{align*}                                                                                 
Hence, we have
 $J_E(X^k, r^k)= J_E(X^{k*}, r^k). $
All constraints are satisfied and $X^{k*}$ is also a feasible solution for~\eqref{Tm}.

Next, we compare $J_D(X^k)$ and $ J_D(X^{k*})$. With $X^{k}_{qi}>X^{k}_{iq}>0$, and
$X^{k*}_{qi} = X^k_{qi}-X^k_{iq} \geq 0$, we have
\begin{align*}
X^k_{qi} > X^{k*}_{qi},\ 
X^{k}_{qi}W_{qi} + X^k_{iq}W_{iq}> X^{k*}_{qi} W_{qi}+X^{k*}_{iq} W_{iq}.
\end{align*}
Thus the partial cost 
$J_D(X^k) > J_D(X^{k*})$,
which contradicts with the assumption that $X^k$ is an optimal solution. To summarize, we show that an optimal solution cannot have $X^k_{qi}>0, X^k_{iq}>0$ at the same time, and at least one of  $X^{k*}_{qi}$ and $X^{k*}_{iq}$ should be $0$.
\end{proof}

With equivalent convex optimization forms under different uncertainty sets, robust taxi dispatch problem~\eqref{Tm} is computationally tractable and solved efficiently.

\section{Data-Driven Evaluations}
\label{sec:simulation}

We conduct data-driven evaluations based on four years of taxi trip data of New York City~\cite{Dan_nyc}. A summary of this data set is shown in Table~\ref{datanyc}. In this data set, every record represents an individual taxi trip, which includes the GPS coordinators of pick up and drop off locations, and the date and time (with precision of seconds) of pick-up and drop-off locations. The dispatch solutions based on different granularities of equal-area region partitions have been compared in~\cite{Miao_tase16}, and other region partition methods are discussed in~\cite{Miao2017}. In the following experiments, we use equal-area grid partition since it is a baseline, and compare the robust and non-robust solutions based on the same region partition method. One partition example given the map of Manhattan area is shown in Figure~\ref{nyc}, where we visualize the density of taxi passenger demand with the data we use for large-scale data-driven evaluations. The lighter the region, the higher the daily demand density, and the middle regions typically have higher density than the uptown and downtown regions. We construct uncertainty sets according to Algorithm~\ref{alg: algorithm_1}, discuss factors that affect modeling of the uncertainty set, and compare optimal costs of the robust dispatch formulation~\eqref{Tm} and the non-robust optimization form~\eqref{nonr} in this section.   

\textbf{How vacant taxis are balanced across regions with different $\alpha$ values}: Figure~\ref{alpha_r} shows mismatch between supply and demand defined as~\eqref{mismatch} for different optimal solutions of minimizing $J_E$ defined in~\eqref{JE} for $\alpha \in (0,1]$. With $\alpha$ closer to $0$, the optimal value of~\eqref{mismatch} is smaller. We choose $\alpha=0.1$ for calculating optimal solutions of~\eqref{Tm} and~\eqref{nonr} in this section.

\begin{figure}[b!]
\vspace{-8pt}
\centering
\includegraphics [width=0.3\textwidth]{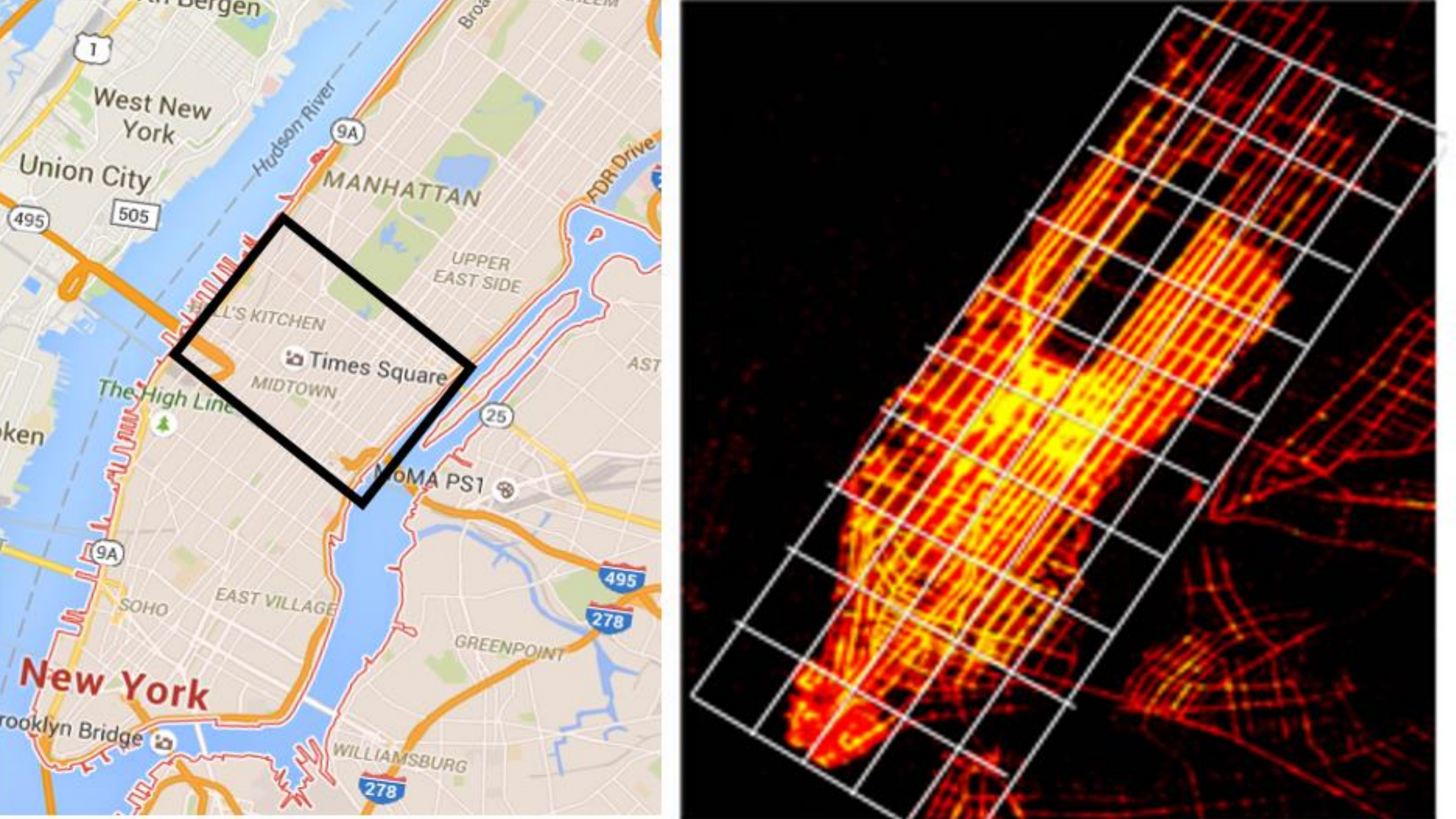}
\vspace{-10pt}
\caption{Map of Manhattan area in New York City.}
\label{nyc}
\end{figure}
\begin{figure}[b!]
\vspace{-10pt}
\centering
\includegraphics [width=0.33\textwidth]{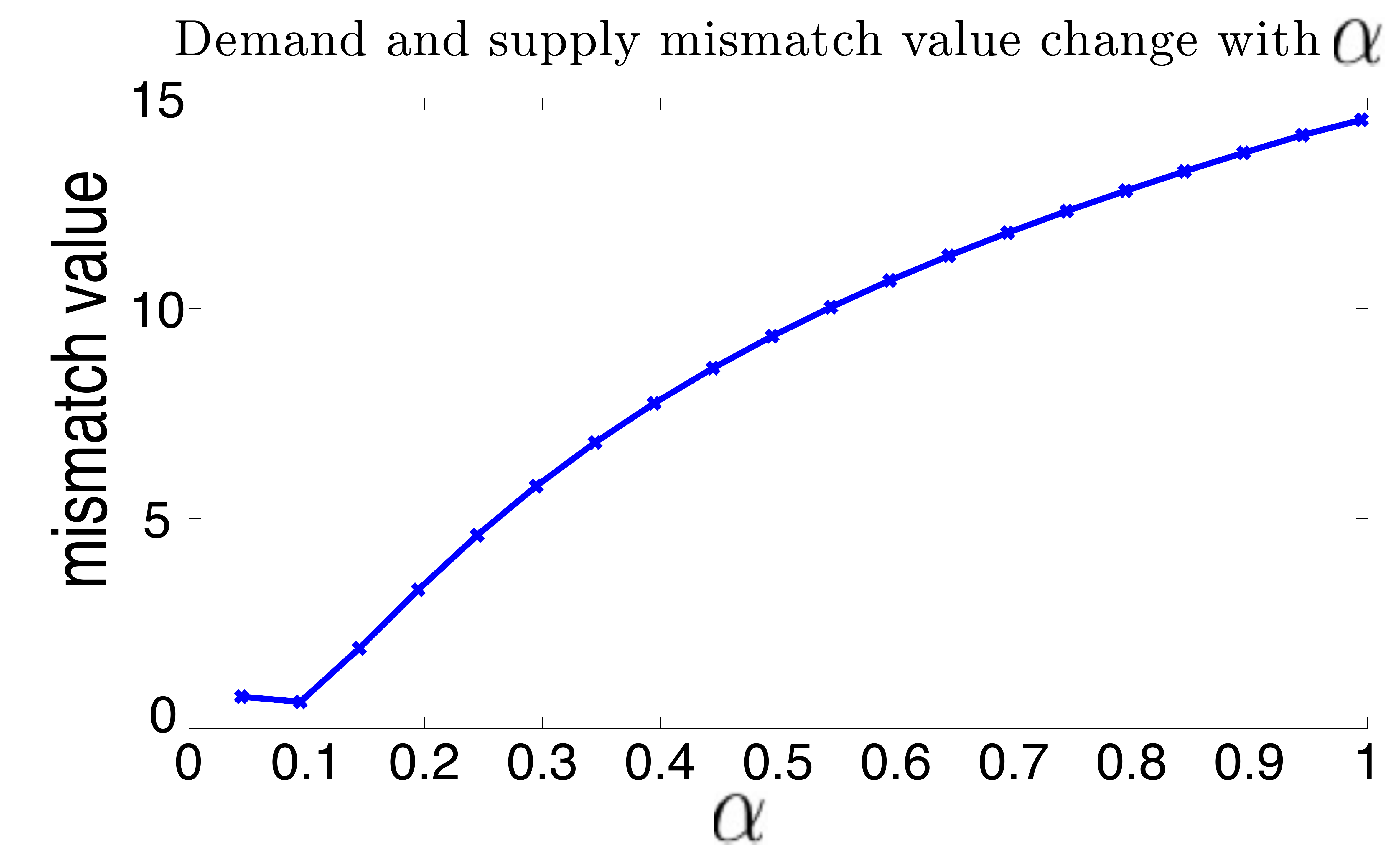}
\vspace{-15pt}
\caption{Comparison of demand and supply mismatch values defined as~\eqref{mismatch} with different solutions for minimizing $J_E$ defined in~\eqref{JE} with $\alpha$ in range $(0,\ 1]$. The value of function~\eqref{mismatch} under an optimal solution of $J_E$ is smaller with an $\alpha$ closer to $0$, which means the dispatch solution tends to be more balanced throughout the entire city.}
\label{alpha_r}
\end{figure} 
\begin{table*}[t!]
\centering
\begin{tabular}{|c|c|c|c|c|c|}
  \hline
 \multicolumn{3}{|c|} {Taxi Trip Data set}                         & \multicolumn{3}{|c|}{Format} \\ \hline
  Collection Period            & Data Size & Record Number           & ID & Trip Time            &Trip Location  \\ \hline
 01/01/2010-12/31/2013 & $100GB$ & about $7$ million         & Date &Start and end time  & GPS coordinates of start and end\\ \hline
 \end{tabular}
 \caption{New York city data in the evaluation section.}
\label{datanyc}
\vspace{-15pt}
\end{table*}

\begin{figure}[!t]
\centering
\includegraphics [width=0.35\textwidth]{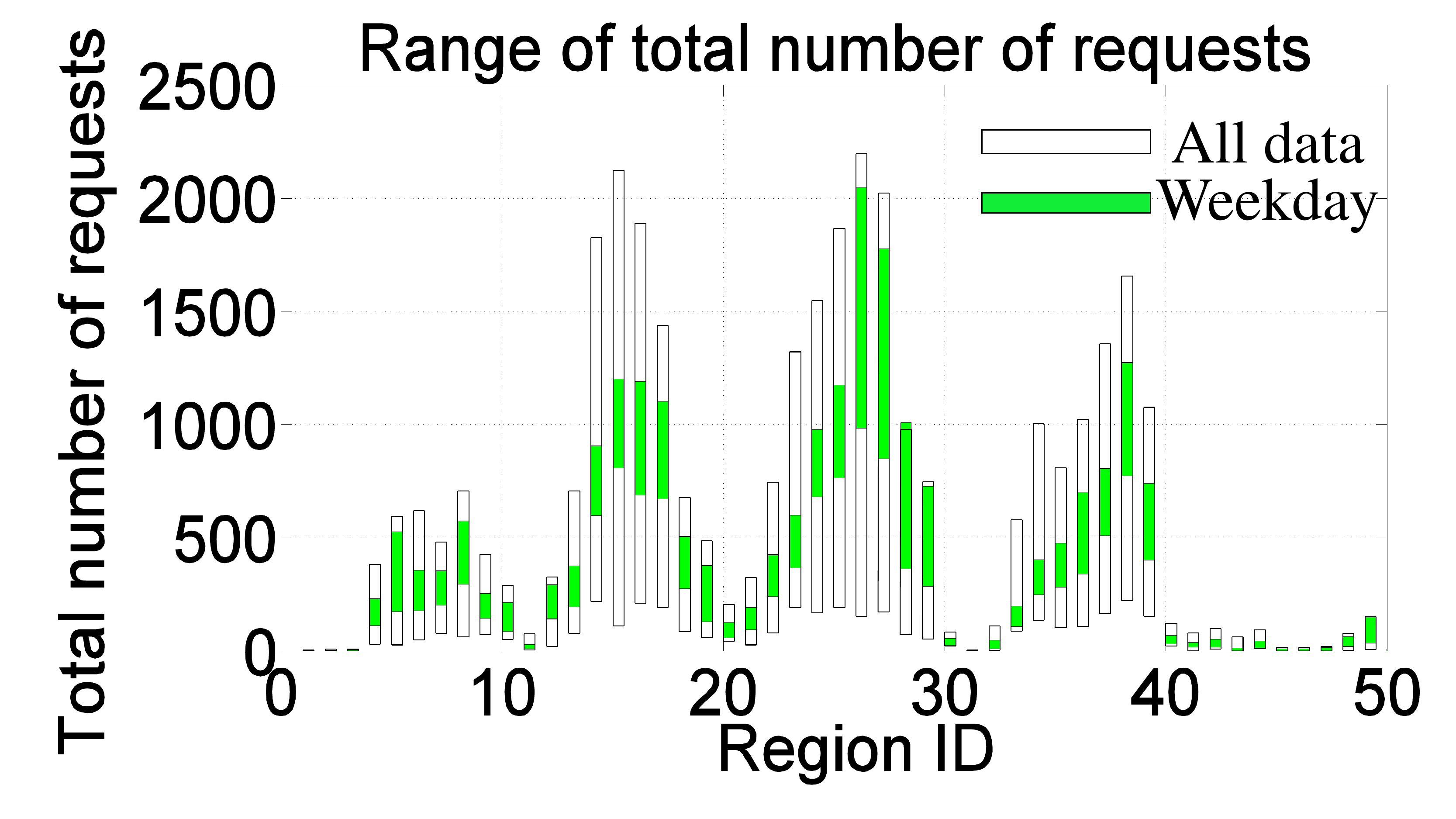}
\vspace{-10pt}
\caption{Comparison of box type of uncertainty sets constructed from all data and those constructed only based on trip records of weekdays. When keeping all parameters the same, by applying data of weekdays, the range of uncertainty set for each $r_{c,i}$ is smaller than that based on the whole dataset.} 
\label{box_weekday}
\vspace{-8pt}
\end{figure}
\begin{figure}[!t]
\centering
\includegraphics [width=0.35\textwidth]{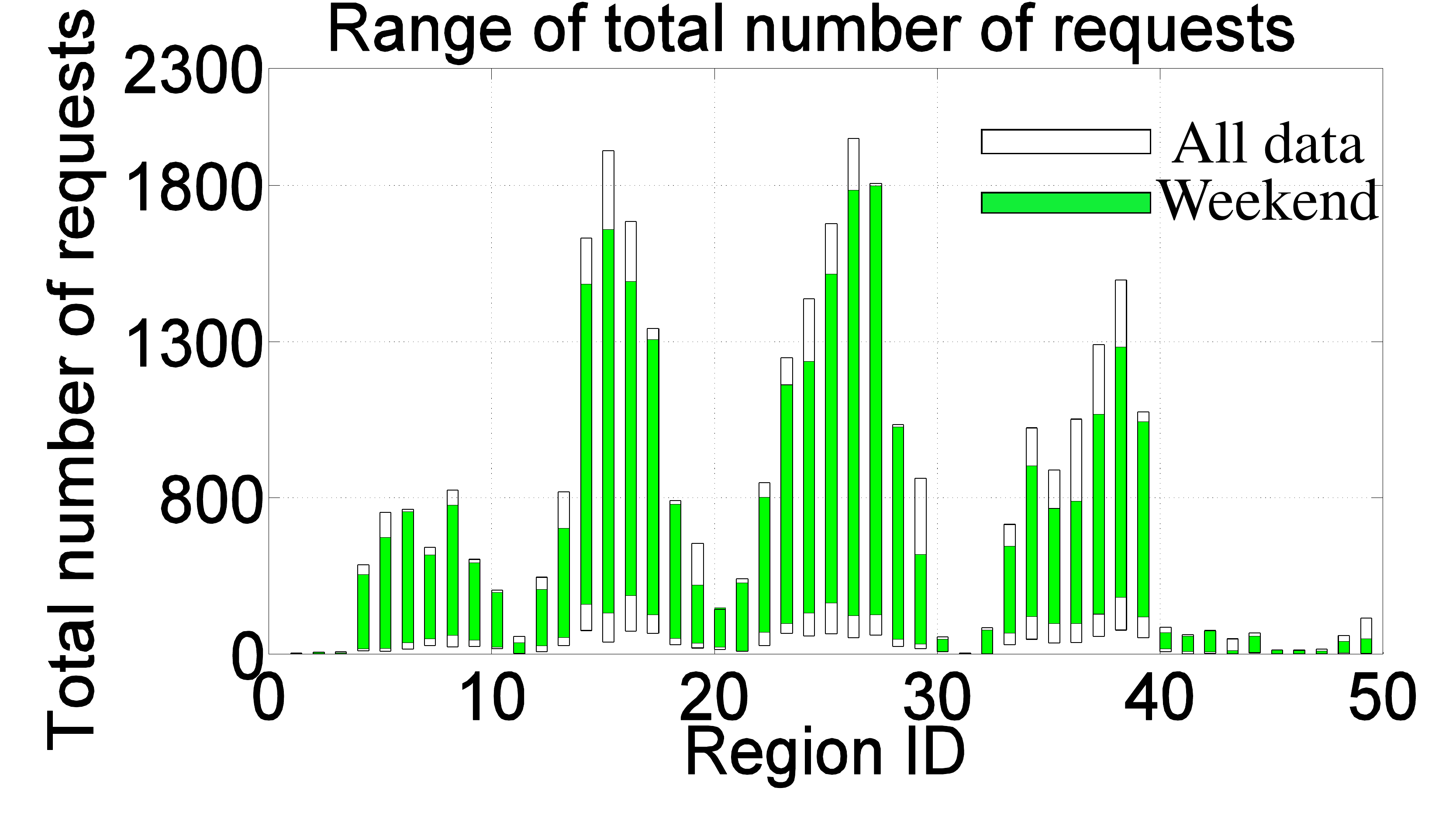}
\vspace{-10pt}
\caption{Comparison of box type of uncertainty sets constructed from all data and uncertainty sets constructed only based on trip records of weekends.}
\label{box_weekend}
\vspace{-10pt}
\end{figure}

\subsection{Box type of uncertainty set}
For all box type of uncertainty sets shown in this subsection with the model described in Subsection~\ref{box_uncertain}, we set the confidence level of hypothesis testings as $\alpha_h=10\%$, bootstrap time as $N_b=1000$, number of randomly sampled data (with replacement) for each time of bootstrap as $N_B=10000$.

\textbf{Partitioned dataset compared with non-partitioned dataset}:
We show the effects of partitioning the trip record dataset by weekdays and weekends in Figure~\ref{box_weekday} and~\ref{box_weekend}. The whole city is partitioned into $50$ regions, the prediction time horizon is $\tau =4$, where one time instant means one hour, $\epsilon=0.3$, and every $r_c \in \mathbb{R}^{200 \times 1}$.  Figures~\ref{box_weekday} and~\ref{box_weekend} show the lower and upper bounds of each region during one time slot of~\eqref{UM_box}.  By applying data of weekdays and weekends separately, the range $[r_{c,i}^{(s)},\ r_{c,i}^{(N_B-s+1)}]$ of each component is reduced. To get a measurement of the uncertainty level, we defined the sum of range of every component for $r_c$  as\\
\centerline{$U (r_c)= \sum\limits_{i=1}^{\tau n}(r_{c,i}^{(s)}-r_{c,i}^{(N_B -s+1)})$.}
For the box type of uncertainty sets, when values of the dimension of $r_c$, i.e., $\tau n$, $\alpha_h$ and $\epsilon$ are fixed, a smaller $U(r_c)$ means a smaller area of the uncertainty set, or a more accurate model. We denote $U(r_c)$ calculated via records of weekdays and weekends as $U_{wd}(r_c)$ and $U_{wn}(r_c)$ respectively, compared with $U(r_c)$ constructed from the complete dataset, we have $\frac{U(r_c)-U_{wd}({r}_c)}{U(\hat{r}_c)}=52\%$, $\frac{U({r}_c)-U_{wn}({r}_c)}{U(\hat{r}_c)}=28\%.$ This result shows that when by constructing an uncertainty set for each subset of partitioned data, we reduce the range of uncertainty sets to provide the same level of probabilistic guarantee for the robust dispatch problem. This is because samples contained in each subset of data do not follow the same distribution and can be categorized as two clusters. 


\textbf{Choose an appropriate $N_B$ for high-dimensional $r_c$:}
It is worth noting that the index $s$ affects the range selection for every component $r_{c,i}$, hence, for different values of $\alpha_h, \epsilon, \tau, n$, we should adjust the number of samples $N$ to get an accurate estimation of the marginal range. As shown in Table~\ref{table_nt},  $N$ need to be large enough for a large $\tau n$ value, or $s$ is too close to $N$ and the upper and lower bounds ${r}_{c,i}^{(N_B-s+1)}$, ${r}_{c,i}^{(s)}$ cover almost the whole range of samples. Hence, the box type uncertainty set is not a good choice for large $\tau n$ value, though the computational cost of solving problem~\eqref{conv_Tm_dep} is smaller than that of~\eqref{u_cs_opt} with the same size of $\tau n$.
\begin{table}[t!]
\centering
\begin{tabular}{|c|c|c|c|c|c|}
  \hline
   $N_B$              & $\alpha_h$     &$\epsilon$      & $n$            &$\tau$             & $s$ \\ \hline
$10000$          &$0.1$            &$0.2$             &  $50$        & $2$      & $9992$ \\ \hline
$10000$          &$0.1$            &$0.5$             &  $50$        & $2$      & $9970$   \\ \hline
$10000$          &$0.3$            &$0.2$             &  $50$        & $2$      & $9991$ \\ \hline
$10000$          &$0.1$            &$0.2$             &  $1000$    & $2$      & $9999$   \\ \hline
$10000$          &$0.1$            &$0.5$             &  $1000$        & $2$   & $9999$  \\ \hline
 \end{tabular}
     \caption{Value of index $s$ for the box type uncertainty set~\eqref{s}. For large $\tau n$, $N_B$ need to be large, or $s$ is too close to $N_B$ that the range covers values of almost all samples.}
     \label{table_nt}
    \vspace{-15pt}
\end{table}

\subsection{SOC type of uncertainty set}
The SOC type of uncertainty set is a high-dimensional convex set that is not able to be plotted. The bootstrapped thresholds for the hypothesis testing to construct the SOC uncertainty sets based on partitioned and non-partitioned data are summarized in Table~\ref{table_sepa}. Similarly as the box type of uncertainty sets, when we separate the dataset and construct an uncertainty demand model for weekdays and weekends respectively, the sets are smaller compared to the uncertain demand model for all dates. When $\alpha$ and $\epsilon$ values are fixed, with smaller $\Gamma_1^B$ and $\Gamma_2^B$, the demand model $\mathcal{U}_{\epsilon}^{CS}$ is more accurate to guarantee that with at least probability $1-\epsilon$, the constraints of the robust dispatch problems are satisfied. Numerical results of this conclusion are shown in Table~\ref{table_sepa}. 
 
\begin{table}[t!]
\centering
\begin{tabular}{|c|c|c|c|}
  \hline
Data type& Weekdays & Weekends &Non partitioned\\ \hline
 $\Gamma_1^B$ &10.53        &13.84          & 17.96   \\ \hline
 $\Gamma_2^B$ & 2576.94   &2923.35  &3864.47    \\ \hline
 \end{tabular}
     \caption{Comparing thresholds with and without discriminating weekdays and weekends data. When $\Gamma_1^B$ or $\Gamma_2^B$ is smaller, the volume of the uncertainty set is smaller. Here $n=1000$, $\tau =3$, $N_B=1000$, $\epsilon=0.3$, $\alpha_h=0.2$.}
     \label{table_sepa}
  \vspace{-20pt}
\end{table}

\textbf{How $n$ and $\tau$ affect the accuracy of uncertainty sets}:
For a box type of uncertainty set, when $\tau n$ is a large value, the bootstrap sample number $N_B$ should be large enough such that index $s$ is not too close to $N$. Without a large enough sample set, we choose to construct an SOC type of uncertainty set (such as $\tau n=1000,  N_B=10000$ in Table~\ref{table_nt}). Since SOC captures more information about the second moment properties of the random vector compared with the box type uncertainty set, some uncorrelated components of $r_c$ will be reflected by the estimated covariance matrix, and the volume of the uncertainty set will be reduced. We show the value of $\Gamma_1^B$ and $\Gamma_2^B$ with different dimensions of $r_c$ or $\tau n$ values in table~\ref{table_nt}. When increasing the value of $\tau n$, values of $\Gamma_1^B$ and $\Gamma_2^B$ are reduced, which means the uncertainty set is smaller. However, it is not helpful to reduce the granularity of region partition to a smaller than street level, since we construct the model for a robust dispatch framework and a too large $n$ is not computationally efficient for the dispatch algorithm.

\begin{table}[t!]
\centering
\begin{tabular}{|c|c|c|}
  \hline
                      &           $\Gamma_1^B$           & $\Gamma_2^B$ \\ \hline
$n=50, \tau=1$                       &$42.37$         &$1.52 \times 10^5$ \\ \hline
$n=50, \tau=3$                       & $52.68$       & $4.29\times 10^4$   \\ \hline
$n=50, \tau=6$                       & $107.35$     &$8.23 \times 10^5$    \\ \hline
$n=10, \tau=3$                       & $71.35$      & $ 3.56\times 10^5$    \\ \hline
$n=1000, \tau=3$                   & $10.53$       &$2576.94$    \\ \hline
 \end{tabular}
     \caption{Comparing thresholds of SOC uncertainty sets for different dimensions $r_c$, by changing either the region partition number $n$ or the prediction time horizon $\tau$.}
     \label{table_nt}
\vspace{-20pt}
\end{table}


\subsection{Compare robust solutions with non-robust solutions}
In the experiments,  the idle geographical distance of one taxi between a drop-off event and the following pick-up event is approximated as one norm distance between the $2$D geographical coordinates (provided as longitude and latitude values of GPS data in the dataset) of the two points. Then the corresponding idle miles on ground is converted from the geographical distance according to the geographical coordinates of New York City. To test the quality of the uncertainty sets applied in the robust dispatch problems, we use the idea of cross-validation from machine learning. The dataset is separated as a training set for building the uncertain demand model, and a testing set for comparing the results of the dispatch solutions. The customer demand models applied in the robust and non-robust optimization problems are different. For the non-robust dispatch problem, the demand prediction $r^k$ is a deterministic vector. For instance, in this work we use the average or mean of the bootstrapped value of the training dataset. The non-robust dispatch solution for each time $k$ is calculated by solving the convex optimization form of dispatch problem formulated in work~\cite{taxi_Feiiccps15, Miao_tase16} with deterministic demand model. For all the experiments, we let $\beta=10$, $\alpha=0.1$ in problem~\eqref{Tm} to calculate the optimal solutions.

In the robust dispatch problem, the penalty function directly includes the uncertain demand $r^k$ is  for violating a balanced demand-supply ratio requirement. For each testing data $r^k$, we denote the demand-supply ratio mismatch error of a dispatch solution as~\eqref{mismatch}. We then compare the value of~\eqref{mismatch} of robust dispatch solutions with the SOC type of uncertainty set constructed in this work with the value of~\eqref{mismatch} of non-robust solutions of testing samples. The distribution of values are shown in Figure~\ref{cost_rdratio}. The average demand-supply ratio error is reduced by $31.7\%$ with robust solutions. We compare the cost distribution of total idle distance in Figure~\ref{cost_total}. It shows the average total idle distance is reduced by $10.13\%$. For all testing, the robust dispatch solutions result in no idle distance greater than $0.8 \times 10^5$, and non-robust solutions has $48\%$ of samples with idle distance greater than $0.8 \times 10^5$.  The cost of robust dispatch~\eqref{Tm} is a weighted sum of both the demand-supply ratio error and estimated total idle driving distance, and the average cost is reduced by $11.8\%$ with robust solutions. It is worth noting that the cost is calculated based on the integer vehicle dispatch solution after rounding the real value optimal solution of~\eqref{Tm}, and the cost is only $1\%$ higher than the optimal cost of~\eqref{Tm}. The performance of the system is improved when the true demand deviates from the average historical value considering model uncertainty information in the robust dispatch process. It is worth noting that the number of total idle distance shown in this figure is the direct calculation result of the robust dispatch problem. When we convert the number to an estimated value of corresponding miles in one year, the result is a total reduction of $20$ million miles in NYC.

\begin{figure}[t!]
\centering
\includegraphics [width=0.36\textwidth]{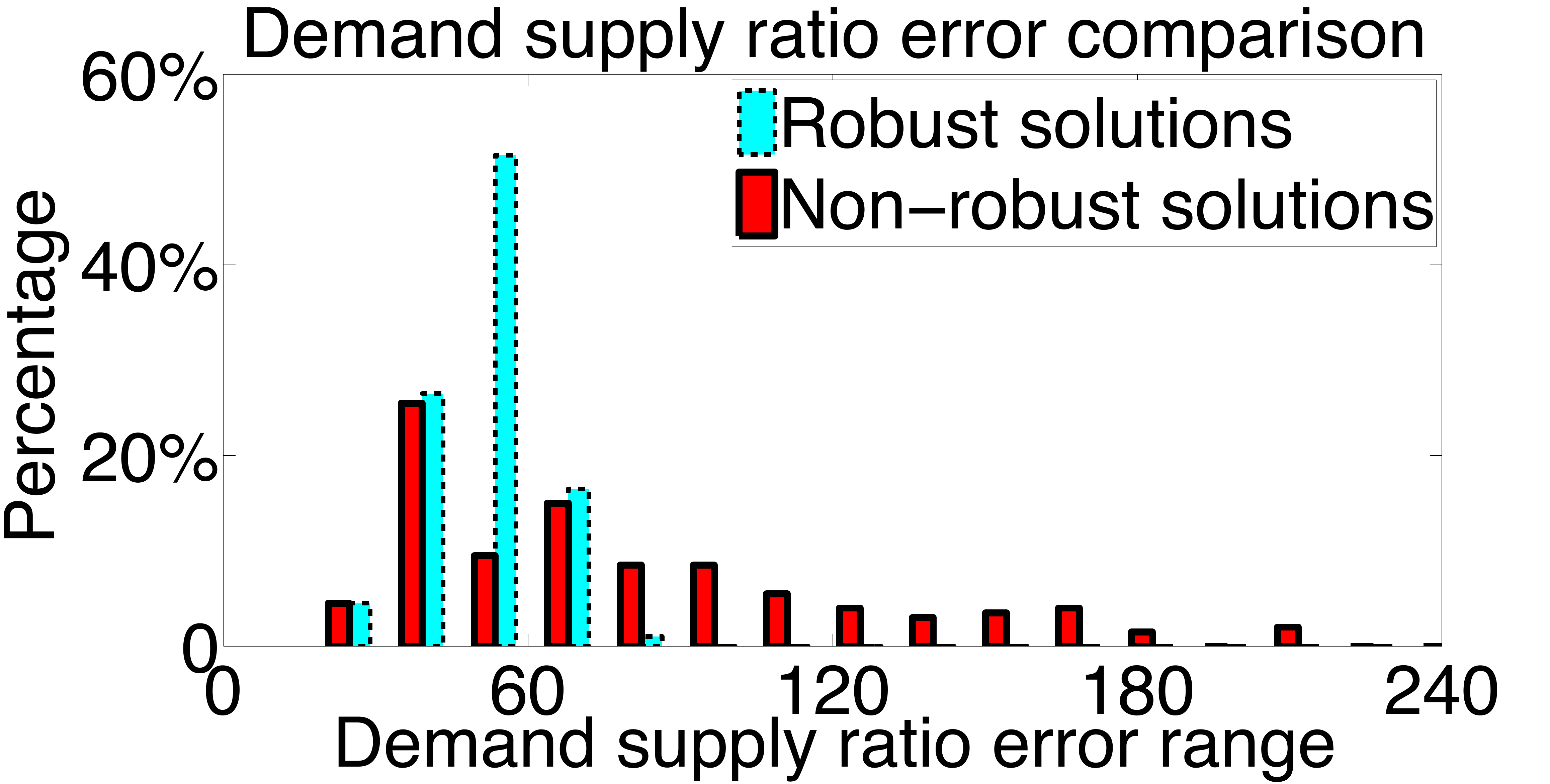}
\vspace{-10pt}
\caption{Demand-supply ratio error distribution of the robust optimization solutions with the SOC type of uncertain demand set ($\epsilon=0.25$, or probabilistic guarantee level $75\%$) and non-robust optimization solutions. The demand-supply ratio error of robust solutions is smaller than that of the non-robust solutions, that  the average demand-supply ratio error is reduced by $31.7\%$.} 
\label{cost_rdratio}
\vspace{-10pt}
\end{figure} 
\begin{figure}[t!]
\centering
\includegraphics [width=0.36\textwidth]{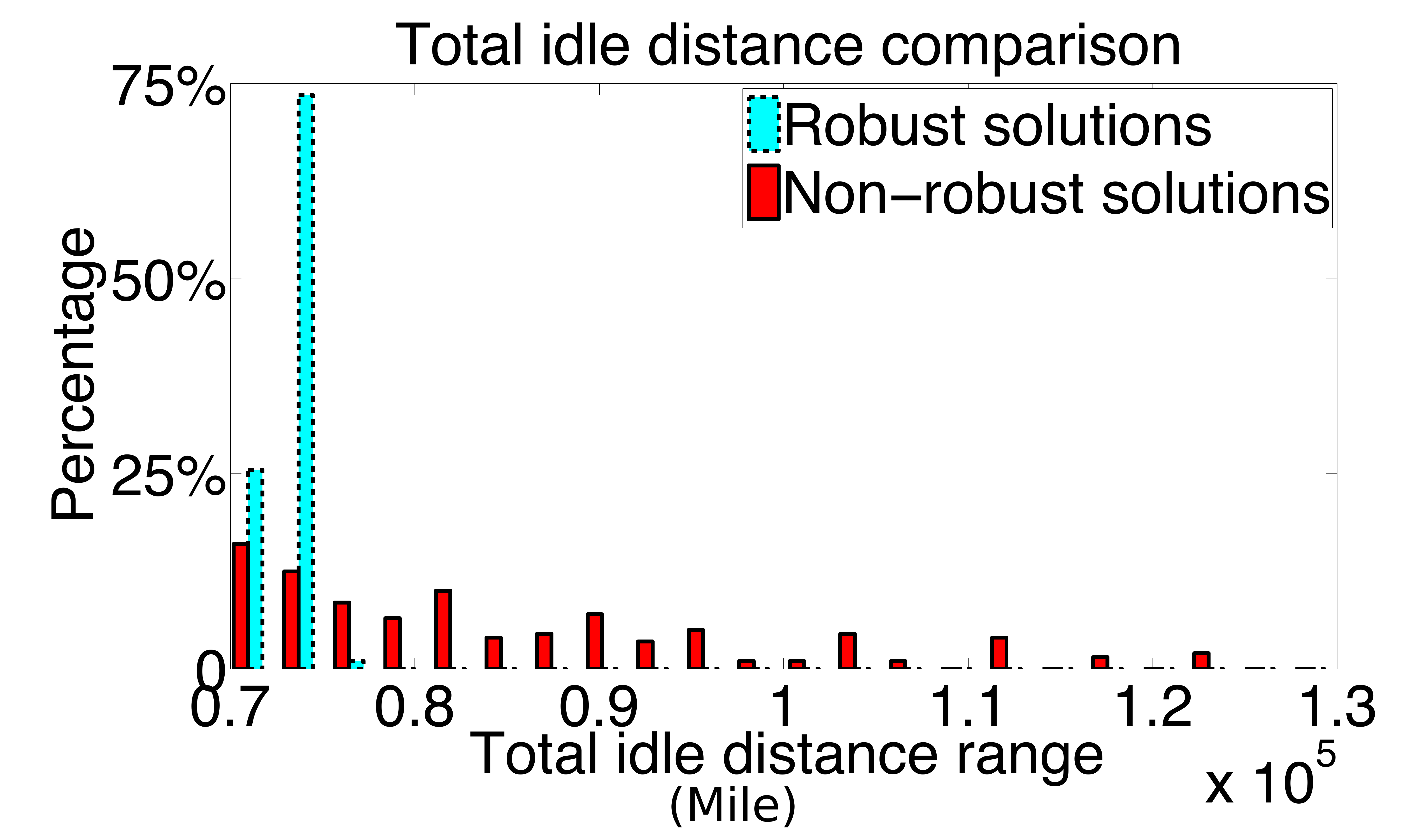}
\vspace{-10pt}
\caption{Total idle distance comparison of robust optimization solutions with the SOC type of uncertain demand set ($\epsilon=0.25$, or probabilistic guarantee level $75\%$) and non-robust optimization solutions. The average total idle distance is reduced by $10.13\%$. For all samples used in testing, the robust dispatch solutions result in no idle distance greater than $0.8 \times 10^5$, and non-robust solutions has $48\%$ of samples with idle distance greater than $0.8 \times 10^5$. The number of total idle distance shown in this figure is the direct calculation result of the robust dispatch problem, and we convert the number to an estimated value of corresponding miles in one year, the result is a total reduction of $20$ million miles in NYC.}
\label{cost_total}
\vspace{-8pt}
\end{figure} 

\textbf{Check whether the probabilistic level $\epsilon$ is guaranteed:}
Theoretically, the optimal solution of the robust dispatch problems with the uncertainty set should guarantee that with at least the probability $(1-\epsilon)$, when the system applies the robust dispatch solutions, the actual dispatch cost under a true demand is smaller than the optimal cost of the robust dispatch problem. Figures~\ref{epsilon_guarantee} and~\ref{epsilon_guarantee_soc} show the cross-validation testing result that the probabilistic guarantee level is reached for both box type and SOC type of uncertainty sets via solving~\eqref{conv_Tm_dep} and~\eqref{u_cs_opt}, respectively. Comparing these two figures, one key insight is that the robust dispatch solution with an SOC type uncertainty set provides a tighter bound on the probabilistic guarantee level that can be reached under the true random demand compared with solutions of the box type uncertainty set. It shows the advantage of considering second order moment information of the random vector, though the computational cost is higher to solve problem~\eqref{u_cs_opt} than to solve problem~\eqref{conv_Tm_dep}.
\begin{figure}[!t]
\centering
\includegraphics [width=0.38\textwidth]{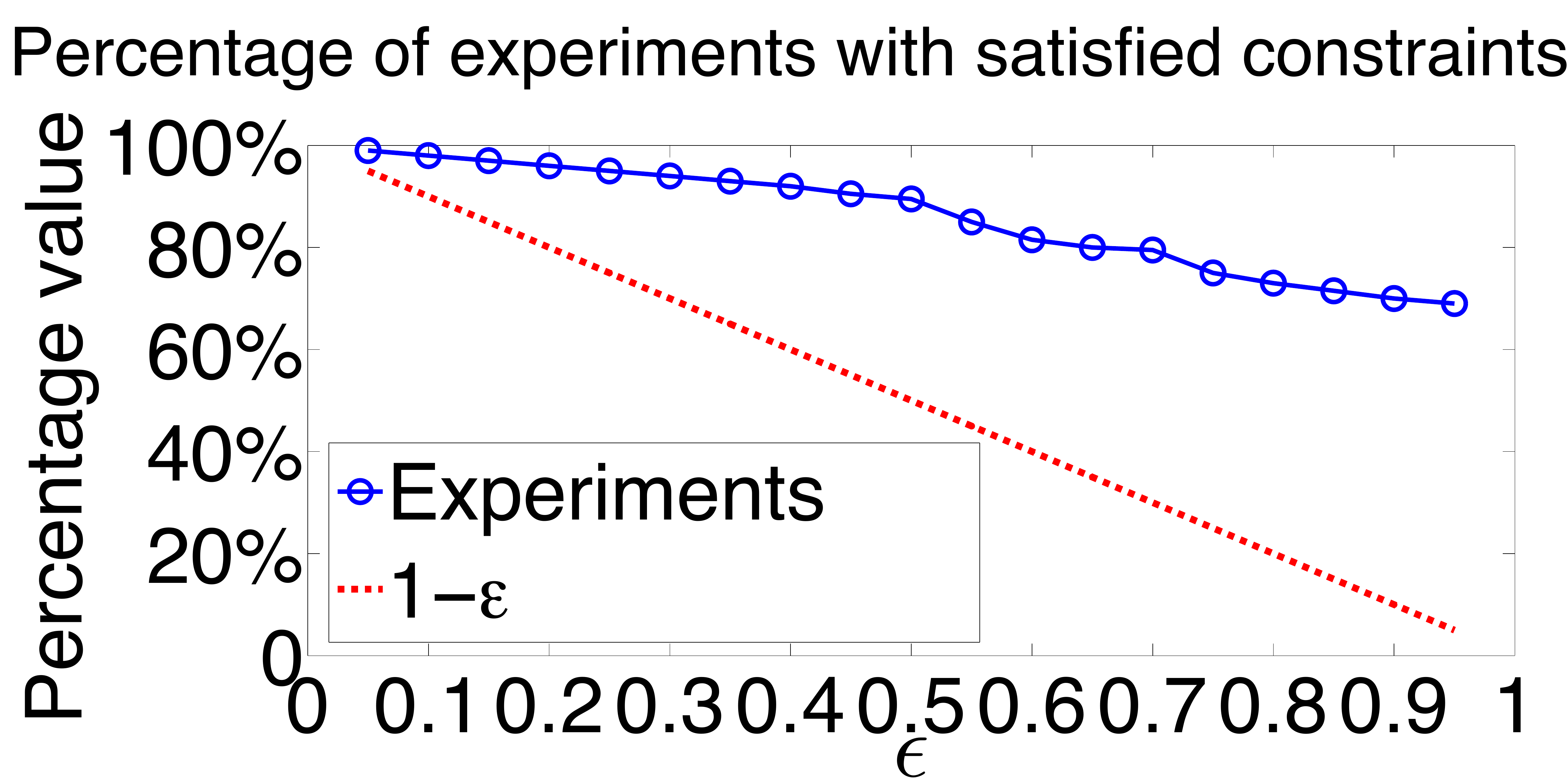}
\vspace{-5pt}
\caption{The percentage of tests that have a smaller true dispatch cost than the optimal cost of the robust dispatch problem with the box type uncertainty set constructed from data. When $1-\epsilon$ decreases, the percentage value also decreases, but always greater than $1-\epsilon$.} 
\label{epsilon_guarantee}
\end{figure}
\begin{figure}[!t]
\centering
\includegraphics [width=0.38\textwidth]{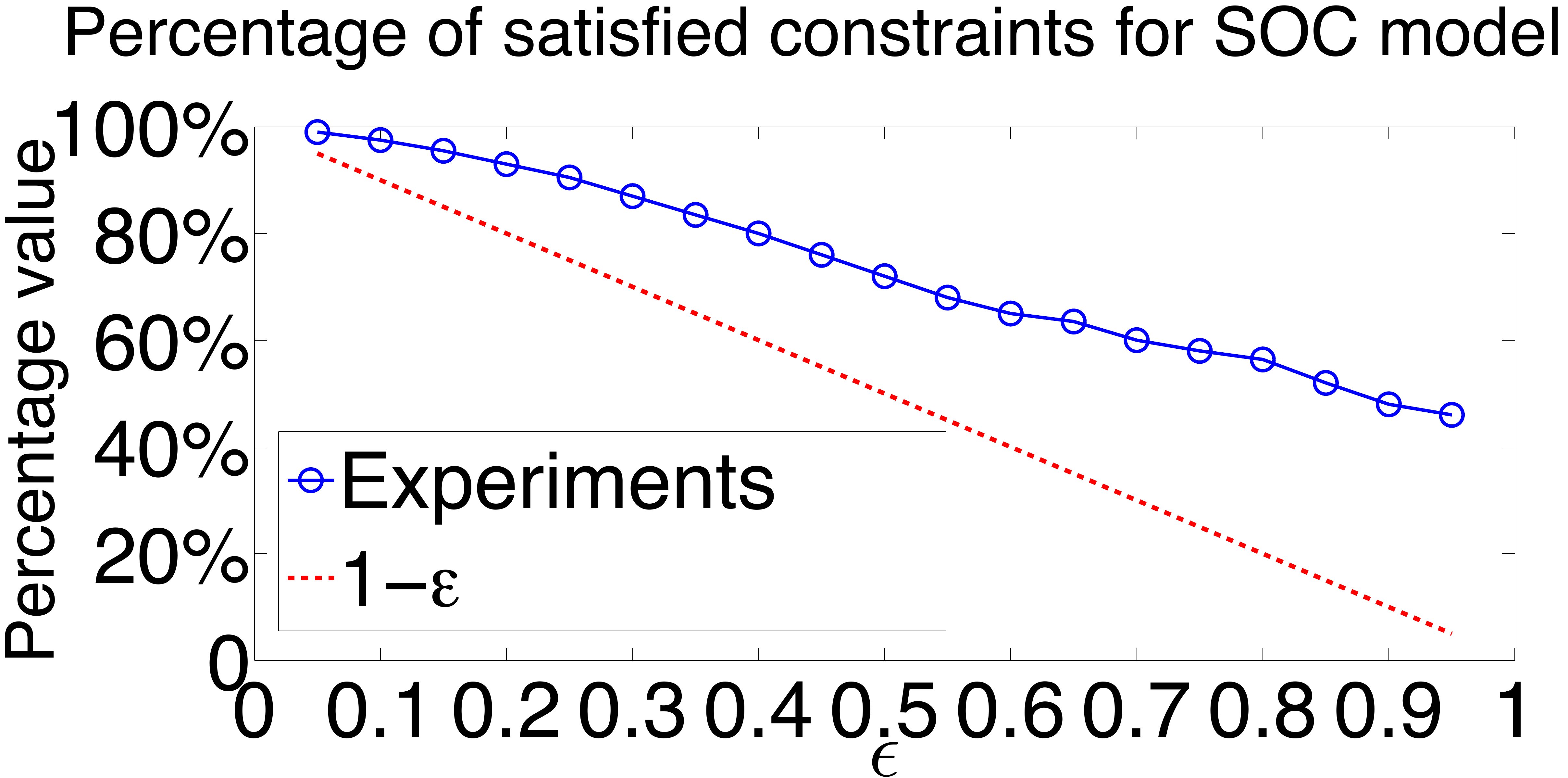}
\vspace{-5pt}
\caption{The percentage of tests that have a smaller true dispatch cost than the optimal cost of the robust dispatch problem with the SOC type of uncertainty set. When $1-\epsilon$ decreases, the percentage value also decreases, but always greater than $1-\epsilon$. The true percentage value is closer to the value of $1-\epsilon$ compared with the solution given a box type uncertainty set.} 
\label{epsilon_guarantee_soc}
\vspace{-10pt}
\end{figure}
\begin{figure}[!t]
\centering
\includegraphics [width=0.40\textwidth]{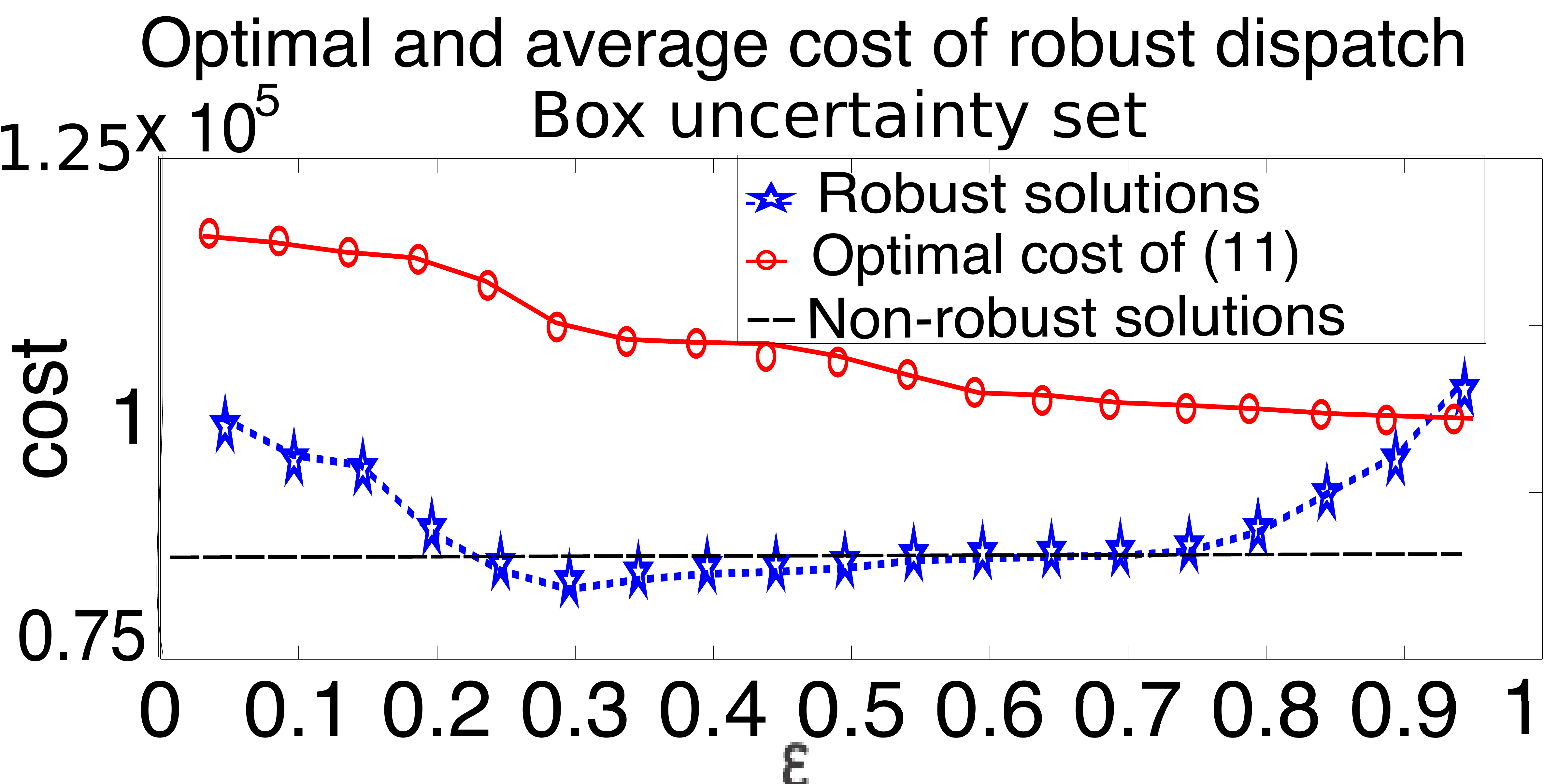}
\vspace{-5pt}
\caption{Comparison of the optimal cost of the robust dispatch problem with box type of uncertainty set and the average cost when applying the robust solutions for the test subset of sampled $r_c$. When $\epsilon=0.3$ the average cost is the smallest.}
\label{epsilon_cost}
\vspace{-5pt}
\end{figure}
\begin{figure}[!t]
\centering
\includegraphics [width=0.40\textwidth]{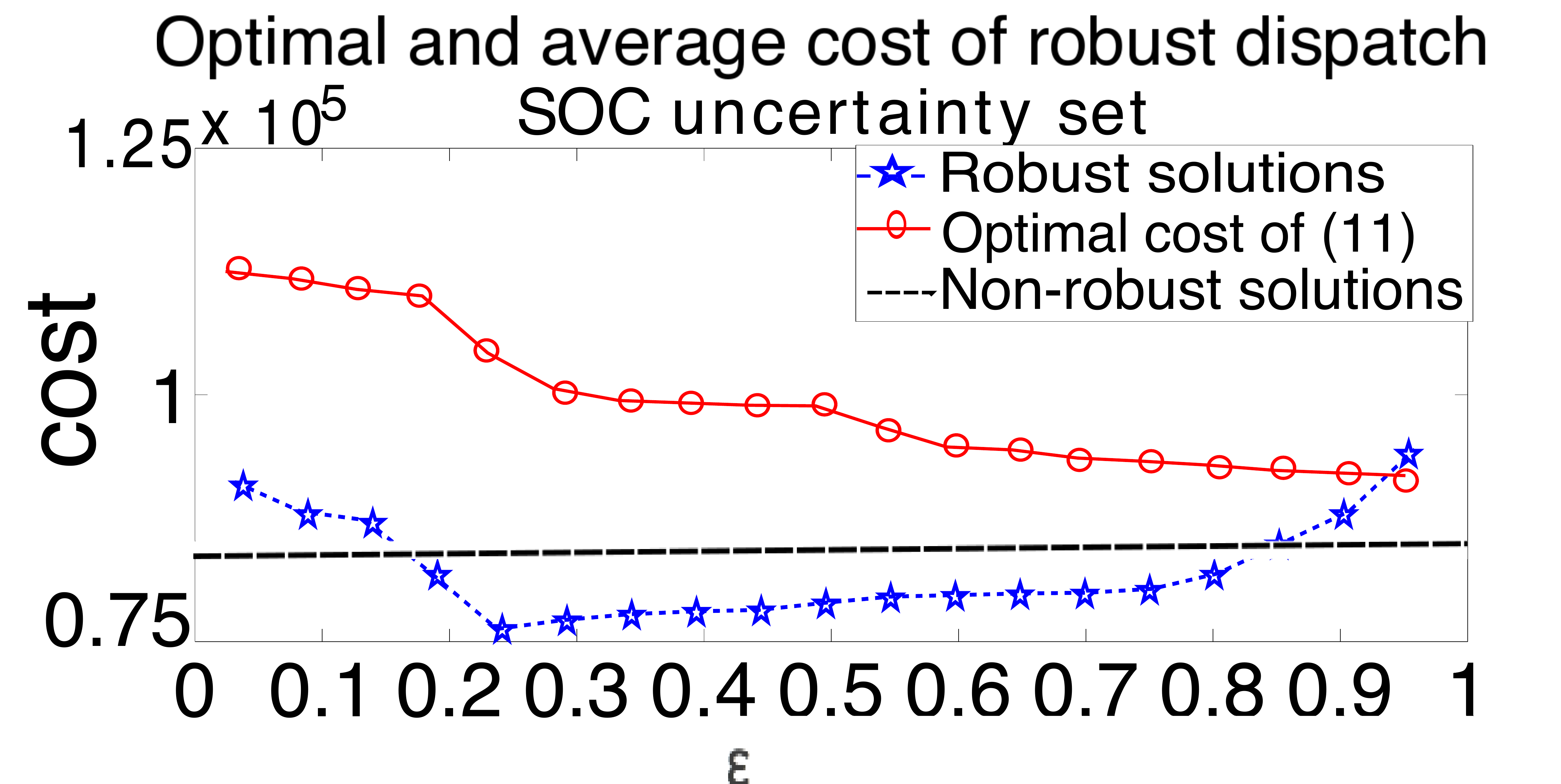}
\vspace{-8pt}
\caption{Comparison of the optimal cost of the robust dispatch problem with SOC type of uncertainty set and the average cost when applying the robust solutions for the test subset of sampled $r_c$. When $\epsilon=0.25$ the average cost is the smallest.}
\label{epsilon_cost_soc}
\vspace{-10pt}
\end{figure}

\textbf{How probabilistic guarantee level affects the average cost}:
There exists a trade-off between the probabilistic guarantee level and the average cost with respect to a random vector $r_c$. Selecting a value for $\epsilon$ is case by case, depending on whether a performance guarantee for the worst case scenario is more important or the average performance is more important. For a high probabilistic guarantee level or a large $1-\epsilon$ value, the average cost may not be good enough since we minimize a worst case that rarely happens in the real world. When $(1-\epsilon)$ is relatively small, the average cost can also be large since many possible values of the random vector are not considered.

We compare the optimal cost of robust solutions and the average cost of empirical tests for two types of uncertainty sets via solving~\eqref{conv_Tm_dep} and~\eqref{u_cs_opt} in Figure~\ref{epsilon_cost} and~\ref{epsilon_cost_soc}, respectively. The optimal cost shows that the result of minimized worst case scenario for all possible $r_c$ included in the uncertainty set, and the average cost shows the of empirical testing cost when we applying the optimal solution to dispatch taxis under random testing data of demand $r_c$. The horizontal line shows the average cost of non-robust solutions that not related to $\epsilon$. The $\epsilon$ values that provide the best average costs are not exactly the same for different types of uncertainty sets according to the experiments. For the box type of uncertainty set in Figure~\ref{epsilon_cost}, $\epsilon=0.3$ provides the smallest average experimental cost; and for SOC type of uncertainty set in Figure~\ref{epsilon_cost_soc}, $\epsilon=0.25$ provides the smallest average cost. The minimum average cost of an SOC robust dispatch solution is smaller than that of a box type. It indicates that the second order moment information of the random variable should be included for modeling the uncertainty set and calculating robust dispatch solutions, though its computational cost is higher.

\section{Conclusion}
\label{sec:conclusion}
In this paper, we develop a multi-stage robust optimization model considering demand model uncertainties in taxi dispatch problems. We model spatial-temporal correlations of the uncertainty demand by partitioning the entire data set according to categorical information, and applying theories without assumptions on the true distribution of the random demand vector. We prove that an equivalent computationally tractable form exist with the constructed polytope and SOC types of uncertainty sets, and the robust taxi dispatch solutions are applicable for a large-scale transportation system. A robust dispatch formulation that purely minimizes the worst-case cost under all possible demand usually sacrifices the average system performance. The robust dispatch method we design allows any probabilistic guarantee level for a minimum cost solution, considering the trade-off between the worst-case cost and the average performance. Evaluations show that under the robust dispatch framework we design, the average demand-supply ratio mismatch error is reduced by $31.7\%$, and the average total idle driving distance is reduced by $10.13\%$ or about $20$ million miles in total in one year. In the future, we will enhance problem formulation considering more uncertain characteristics of taxi network model, like traffic conditions. 

\bibliographystyle{abbrv}
{  \small 
\bibliography{uncertain}
}
\section{Appendix}
\label{appendix}
\subsection{Proof of Lemma~\ref{jematch}}
\label{appendix_probform}
\begin{proof}
We first consider the problem of minimizing $\sum_{i=1}^{n}\left|\frac{r^k_i}{\sum\limits_{j=1}^{n} X^k_{ji}-\sum\limits_{j=1}^{n} X^k_{ij} + L^k_i}-\frac{\sum\limits_{j=1}^n r_j^k}{N^k}\right|$ for one time slot $k$. 
\begin{align}
\sum\limits_{j=1}^{n} X^k_{ji}-\sum\limits_{j=1}^{n} X^k_{ij} + L^k_i =b^k_i, \ i=1,\dots, n.
\label{bi}
\end{align}
Given a vector $L^k$ that satisfies $L^k_i \geqslant 0$, $\sum\limits_{i=1}^k=N^k$, we have $\sum\limits_{i=1}^n b^k_i=\sum\limits_{i=1}^n L^k_i = N^k$, since balancing vacant vehicles does not change the total number of vacant vehicles in the city. 
 
To explain how~\eqref{JE} approximates~\eqref{mismatch} under constraints~\eqref{total_i} and~\eqref{total_N}, consider the following problem given $r^k_1, \dots, r^k_n$, $N^k=c$:
\begin{align}
	\underset{b^k_i>0, \sum_i b^k_i = c}{\text{minimize}}\sum_i \frac{r^k_i}{(b^k_i)^\alpha},\quad c\ \text{is a constant}.
	\label{ab}
\end{align}
We substitute $b^k_n=c-b^k_1\dots-b^k_{n-1}$ into~\eqref{ab}, and take partial derivatives of $\sum_i \frac{r^k_i}{(b^k_i)^\alpha}$ over $b^k_i, i=1,\dots, n-1$. When the minimum of~\eqref{JE} is achieved, each partial derivative should be $0$, $-\alpha \frac{r^k_i}{(b^k)_i^{\alpha+1}}-\alpha (-1)\frac{r^k_n}{(c-b^k_1\dots-b^k_{n-1})^{\alpha+1}}=0$, which is equivalent to $\frac{r^k_1}{(b^k_1)^{\alpha+1}}=\dots=\frac{r^k_{n-1}}{(b^k_{n-1})^{\alpha+1}}=\frac{r^k_n}{(b^k_n)^{\alpha+1}}.$

Let $\frac{r^k_1}{(b^k_1)^{\alpha+1}}=\dots=\frac{r^k_{n-1}}{(b^k_{n-1})^{\alpha+1}}=\frac{r^k_n}{(b^k_n)^{\alpha+1}}=c_0$, $\gamma=\frac{1}{\alpha+1}$, when $\alpha>0$, $0<\gamma<1$. Assume that $\sum\limits_{i=1}^{n}r^k_i = a$, then
\begin{align*}
\begin{split}
(r^k_1)^{\gamma}&=b^k_1 c_0,\quad \dots,\quad(r^k_n)^{\gamma}=b^k_n c_0, \\
\sum\limits_{i=1}^{n}(r^k_i)^{\gamma}&=(b^k_1+\dots+b^k_n)c_0=c c_0 \Rightarrow c_0= \frac{1}{c} \sum\limits_{i=1}^{n}\sum\limits_{i=1}^{n}(r^k_i)^{\gamma},\\
(r^k_i)^{\gamma}&=\frac{b_i}{c} \sum\limits_{j=1}^{n}(r^k_j)^{\gamma},\quad \frac{r^k_i}{b^k_i}=\frac{(a^k_i)^{1-\gamma}}{c} \sum\limits_{j=1}^{n}(r^k_j)^{\gamma}
\end{split}
\end{align*}
We would like to prove that for any $\epsilon_0 >0$,  any $i\in\{1, \dots, n\}$, there exists a $0 < \gamma <1$, such that
\begin{align}
\begin{split}
 \left|\frac{(r^k_i)^{1-\gamma}}{c} \sum\limits_{j=1}^{n}(r^k_j)^{\gamma}-\frac{a}{c} \right| < \epsilon_0,
\end{split}
\label{epsilon0}
\end{align}

To prove~\eqref{epsilon0}, it is worth noting that for any given values of $r^k_i \geqslant 1,i=1,\dots,n, c>0$, function $f_i(\gamma)=\frac{(r^k_i)^{1-\gamma}}{c} \sum\limits_{j=1}^{n}(r^k_j)^{\gamma}$ is a continuous function of $\gamma$, and $f_i(\gamma=1)=\frac{a}{c}$ for any $i$. Then for any $\epsilon_0>0$ and any $(i, k)$, there exists a $\delta^k_i >0$, such that
\begin{align*}
	|\gamma-1| < \delta^k_i \Rightarrow \left|\frac{(r^k_i)^{1-\gamma}}{c} \sum\limits_{j=1}^{n}(r^k_j)^{\gamma}-\frac{a}{c} \right| < \epsilon_0. 
\end{align*} 

Then let $\delta=\text{min}\{\delta_1^1, \delta_2^1,\dots,\delta^k_n\}$ (when $\epsilon_0$ is small, $\delta$ indicates a small range, so $0<\delta <1$), then for any $\gamma$ in the range $1-\delta <\gamma <1$, the inequality~\eqref{epsilon0} holds for all $k$. Without loss of generality, let $\gamma=1-0.5 \delta$, $\alpha=\frac{2}{2-\delta}$, then the optimal solution of problem~\eqref{ab} is $b_i^k=\frac{cr_i^k}{(r^k_i)^{0.5\delta}\sum\limits_{j=1}^{n}(r^k_j)^{1-0.5\delta}}$, $i=1,\dots,n$.

It is worth noting that given any values of $b^k_1>0, \dots, b^k_n>0$, $L^k_1\geqslant 0, \dots, L^k_n\geqslant 0$ that satisfies $\sum\limits_{i=1}^n b^k_i=\sum\limits_{i=1}^n L^k_i$, the equation set~\eqref{bi} has a feasible solution for $n\times n$ variables of the matrix $X^k$. This can be checked by vectorizing matrix $X^k$ to a vector $Y^k \in \mathbb{R}^{n^2}$ and transforming equation set~\eqref{bi} to a new equation set of $Y^k$. We get a homogeneous equation set with $n$ equations and $n \times n$ variables of $Y^k$, which always has a feasible solution. Hence, we plug in the values of $b_i^k=\frac{cr_i^k}{(a^k_i)^{0.5\delta}\sum\limits_{j=1}^{n}(r^k_j)^{1-0.5\delta}}$ to~\eqref{bi} to get values of $X^k_{ij}$. When a solution violates the non-negative constraint of $X_{ij}^k$, just compare the value of $X_{ij}^k$ and $X_{ji}^k$, without loss of generality we assume that $X_{ij}^k > X_{ji}^k$, then let the final feasible solution be $(X_{ij}^k)' = X_{ij}^k-X_{ji}^k$, $(X_{ji}^k)'=0$, the equation set~\eqref{bi} still holds and we have a non-negative optimal solution of $X^k_{ij}, X^k_{ji}$ that keeps the inequality~\eqref{epsilon0_lemma} hold. It is worth noting that we may have multiple optimal solutions of $X^k_{ij}$ by minimizing~\eqref{JE} under constraints~\eqref{total_i} and~\eqref{total_N}, with $\alpha=\frac{2}{2-\delta}$. However, these optimal solutions will result in different values of the other term~\eqref{JD} about the total idle distance in the objective function of~\eqref{Tm}, and only solutions of problem~\eqref{JE} that also satisfy other constraints such as~\eqref{idle_upper} can be feasible solutions of problem~\eqref{Tm}. Hence, we use~\eqref{JE} as an service fairness metric term of the objective function for problem~\eqref{Tm}, and approximately minimize the difference between local and global demand-supply ratios by minimizing~\eqref{JE}.

It is worth noting that when $\epsilon_0$ is small and $\gamma_0$ is close to $1$, $\alpha$ is close to 0.
\end{proof}

\subsection{Proof of Theorem~\ref{T1_convex}}
\label{appendix_T1}
\begin{proof}To find the equivalent form of the minimax problem~\eqref{Tm} when $\tau=1$ (here we only have variable $X$, not $X^2,\dots, X^{\tau}$), the main step is to find the dual problem of the maximization over $r$ for any fixed $X$ and $L$. No constraint of problem~\eqref{Tm} is a function of $r$, when considering the maximization problem with variable $r$ and already fixed $X$ and $L$, the constraints do not affect the values of $r$. Hence, to find the equivalent minimization form of the maximization problem, we do not include constraints irrelevant to $r$ and only consider the objective function part. For any fixed $X$ and $L$, the maximum part of problem~\eqref{Tm} is equivalent to
\begin{align}
\begin{split}
\underset{r\in \Delta}{\text{max}}\quad & J_D(X)+\beta J_E(X,r)=J_D(X)+ c^T(X) r\\
[c(X)]_i= &\beta\frac{1}{(\sum\limits_{j=1}^{n} X_{ji}-\sum\limits_{j=1}^{n}X_{ij}+ L_i)^\alpha}, \\
J_D(X)=&\sum_{i} \sum_{j}X_{ij} W_{ij}.
\end{split}
\label{primal}
\end{align}

Here $J_E(X,r)$ is affine (also concave) of $r$ for any fixed value of $(X,L)$, since with $(X,L)$ fixed, function $[c(X)]_i$ also has a fixed value. And $J_E(X,r)$ a convex function of $(X,L)$ for any fixed value of $r$. The function of power $\frac{1}{x^{\alpha}}$ is convex on scalar $x>0$ when $\alpha>0$~\cite[Chapter 3.1.5]{book_convex}. Consider a concatenated matrix $[X,L]\in \mathbb{R}^{n\times (n+1)}$ with the last column as vector $L\in\mathbb{R}^{n}$, and a matrix $A^i \in \mathbb{R}^{n\times (n+1)}$ with $A^i_{ji}=1$, $j=1,\dots,n$, $A^i_{ij}=-1$, $j=1,\dots,n$, $A^i_{i,(n+1)}=1$. Then $\sum\limits_{j=1}^{n} X_{ji}-\sum\limits_{j=1}^{n}X_{ij}+ L_i=Tr A^i [X,L]=\sum\limits_{i=1}^n\sum\limits_{j=1}^{n+1} A^i_{ij} [X,L]_{ij}$,  and $[c(X)]_i=\frac{1}{(Tr A^i [X,L])^{\alpha}}$ is a composition of function $\frac{1}{x^{\alpha}}$ with affine mapping $Tr A^i [X,L]: \mathbb{R}^{n \times (n+1)} \to R_+$, trace of the multiplication of matrices $[X,L]$ and $A^i$. Because composition with an affine mapping is an operation that preserves convexity~\cite[Chapter 3.2.2]{book_convex}, $[c(X)]_i$ is a convex function of $X$ and $L$. Finally, $J_E(X,r)=\sum\limits_{i=1}^n \beta r_i [c(X)]_i$, $\beta r_i \geqslant 0$ is a nonnegative weighted sum of convex functions $[c(X)]_i$, an operation that preserves convexity~\cite[Chapter 3.2.1]{book_convex}. Hence, $J_E(X,r)$ is a convex function of $X$ and $L$. 

The Lagrangian of problem~\eqref{primal} with the Lagrangian multipliers $\lambda \geq 0, v\geq 0$ is
$\mathcal{L}(X,r,\lambda,v)
=J_D(X)+b^T\lambda-(A^T\lambda-c(X)-v)^T r,$
where $(A^T\lambda-c(X)-v)^T r$ is a linear function of $r$, and the upper bound exists only when $A^T\lambda-c(X)-v=0$.
The objective function of the dual problem is
\begin{align*}
\begin{split}
g(X,\lambda,v)&=\sup_{r\in \Delta} \mathcal{L}(X,r,\lambda,v)\\
                      &=\begin{cases} J_D(X)+b^T\lambda\quad \text{if}\quad A^T\lambda-c(X)-v=0.\\
                                                         \infty \quad\text{otherwise}
                          \end{cases}
\end{split}
\end{align*}
With $v\geq 0,$ the constraint $A^T\lambda-c(X)-v=0$ is equivalent to $A^T\lambda-c(X) \geq 0.$ Strong duality holds for problem of~\eqref{primal} since it satisfies the refined Slater's condition for affine inequality constraints~\cite[Chapter 5.2.3]{book_convex}---the primal problem is convex, $c^T(X) r$ is affine of $r$, and by the definition of the uncertainty set, the non-empty affine inequality constraint of $r$ is feasible. The primal convex problem is feasible with affine inequality constraints. The dual problem of~\eqref{primal} is
\begin{align}
\begin{split}
\underset{\lambda \geq 0}{\text{min}} \quad J_D(X)+b^T\lambda \quad
\text{s.t.}\quad A^T\lambda-c(X) \geq 0.
\end{split}
\label{dual_T1}
\end{align}

The minimization problem~\eqref{dual_T1} is  the dual problem of~\eqref{primal} with the same optimal cost for any fixed value of $X$ and $L$, and problem~\eqref{Tm} is to minimize the same objective $J_D(X)+b^T\lambda$ also over $X$ (when $T=1$, $L$ is the number of initial empty vehicles at each region measured by GPS data, so $L$ is a provided parameter in this case. When $\tau \geqslant 2$, $L^k, k=2,\dots, \tau$ are variables) together with the constraints about $X$. The constraint $A^T-c(X)\geq 0$ is convex of $X$, since $[c(X)]_i$ is convex of $X$ for $i=1,\dots,n$, and the constraint of $A^T-c(X)\geq 0$ is equivalent to $n$ inequalities between convex functions and a scalar $0$, which are convex constraint inequalities. Grouping the minimization objective and all the constraints of problem~\eqref{Tm}, we get problem~\eqref{conv_T1} as the equivalent convex optimization form of problem~\eqref{Tm}.
\end{proof}

\subsection{Proof of Lemma~\ref{lemma_minimax}}
\label{appendix_minimax}
\begin{proof}
Now consider the maximin problem over stage $k$ and $k+1$, $1\leqslant k \leqslant \tau-1$ of problem~\eqref{Tm}
\begin{align}
\begin{split}
\underset{r^{k}\in\Delta_k}{\text{max}}\ \underset{X^{k+1},L^{k+1}}{\text{min}}\quad J =&\sum_{k=1}^{\tau} (J_D(X^k)+\beta J_E(X^k,r^k))\\
\text{s.t.}
                 \quad &\text{constraints of~\eqref{Tm}}. 
\end{split}
\label{minimax}
\end{align}
The domain of problem~\eqref{minimax} satisfies that $X^{k+1}, L^{k+1}, \lambda$ is compact, and the domain of $r^{k}$ is compact. The objective function is a closed function convex over $X^{k+1}, L^{k+1}$ and concave over $r^{k}$. According to Proposition $2.6.9$ with condition (1) of~\cite{analysis_conv}, when the objective and constraint functions are convex of the decision variables, concave of the uncertain parameters, and the domain of decision variables and uncertain parameters are compact, the set of saddle points for the maximin problem at time $k$ and $k+1$, i.e., $\underset{r^{k}\in\Delta_k}{\text{max}}\ \underset{X^{k+1},L^{k+1}}{\text{min}} J$ with the objective function and constraints of problem~\eqref{minimax} is nonempty. The minimax equality holds for problem~\eqref{minimax} at time $k$ and $k+1$: 
\\\centerline{$\underset{r^{k}\in\Delta_k}{\text{max}}\ \underset{X^{k+1},L^{k+1}}{\text{min}} J =\underset{X^{k+1},L^{k+1}}{\text{min}}\ \underset{r^{k}\in\Delta_k}{\text{max}} J.
$}
Repeat the above proof process from $k=\tau-1$ backwards to $k=1$, we get a minimax form of robust optimization problem $\underset{X^{1:\tau},L^{2:\tau}}{\text{min}}\ \underset{r^{1}\in\Delta_1, \dots, r^{\tau} \in \Delta_{\tau}}{\text{max}} J=\underset{X^{1:\tau},L^{2:\tau}}{\text{min}}\ \underset{r_c \in \Delta}{\text{max}} J$. 
\end{proof}

\subsection{Proof of Lemma~\ref{lemma_Tm_poly} and Theorem~\ref{Tm_poly}}
\label{appendix_poly}
\subsubsection{Proof of Lemma~\ref{lemma_Tm_poly}}
\label{appendix_Tm_poly1}
\begin{proof}
With the polytope form of uncertainty set~\eqref{polytope}, the domain of each $r^k$ is closed and convex, i.e., is compact, and Lemma~\ref{lemma_minimax} holds. Considering the maximizing part of problem~\eqref{Tm_minimax}
\begin{align}
\underset{r^1\in \Delta_1, \dots, r^{\tau}\in\Delta_{\tau}}{\text{max}}\quad  J, \quad
\text{s.t.}    \quad \text{constraints of~\eqref{Tm}},
\label{obj_Tm}
\end{align}
the Lagrangian of~\eqref{obj_Tm} with multipliers $\lambda^k \geqslant 0, v^k \geqslant 0$ is $\mathcal{L}(X^k,r^k,\lambda^k,v^k)=\sum_{k=1}^{\tau}(J_D(X^k)+b_k^T\lambda^k-(A_k^T\lambda^k-c(X^k)-v^k)^T r^k)$.
Hence, based on the proof of Theorem~\ref{T1_convex}, we take partial derivative of the Lagrangian for every $r^k\in \Delta_k$. The inequality constraint of $r^k\in\Delta_k$ defined as~\eqref{polytope} is affine of $r^k$ and feasible (non-empty), $c^T(X^k)r^k$ is affine of $r^k$, and problem~\eqref{obj_Tm} is convex with feasible affine inequality constraints. Hence, refined Slater's condition for affine constraints is satisfied and strong duality holds for problem~\eqref{obj_Tm}. An equivalent form of~\eqref{Tm} under uncertainty set~\eqref{polytope} is defined as~\eqref{conv_Tm}.
\end{proof}

\subsubsection{Proof of Theorem~\ref{Tm_poly}}
\label{appendix_Tm_poly2}
\begin{proof}
With uncertain set defined as~\eqref{delta}, the domain of each $r^k$ is compact and Lemma~\ref{lemma_minimax} holds. We consider the equivalent problem~\eqref{Tm_minimax} of~\eqref{Tm}, and first derive the Lagrangian of the maximum part of the objective function~\eqref{Tm_minimax} with constraint $\lambda \geq 0, v_{k} \geq 0$
\begin{align}
\begin{split}
    &\mathcal{L}(X^k,r^k, \lambda,v_{k})\\
                                           =&b^T\lambda-\sum_{k=1}^{\tau}((A^T_k\lambda-c(X^k)-v_k)^T r^k-J_D(X^k)),
\end{split}
\label{lag_Tm}
\end{align}
Similarly as the proof of Theorem~\ref{T1_convex}, we take the partial derivative of~\eqref{lag_Tm} over each $r^{k}$, the objective function of the dual problem is 
\begin{align*}\begin{split} 
\sup_{r^k \in \Delta_k} \mathcal{L}(X^k,r^k,\lambda,v_{k})=\sum\limits_{k=1}^{\tau} J_D(X^k)+b^T\lambda 
\end{split}
\end{align*}
when $A_{k}^T\lambda-c(X^{k})-v_{k}=0$.


Since the inequality constraint of the uncertainty set defined as~\eqref{delta} is affine of each $r^k$ and feasible (non-empty uncertainty set), $c^T(X^k)r^k$ is affine of $r^k$, and problem~\eqref{obj_Tm} is convex with feasible affine inequality constraints, refined Slater's condition with affine inequality constraints is satisfied. Then strong duality holds, problem~\eqref{conv_Tm_dep} is a equivalent to the computationally tractable convex optimization form~\eqref{Tm} under uncertain set~\eqref{delta}.
\end{proof}

\subsection{Proof of Theorem~\ref{theorem_soc}}
\label{appendix_soc}
\begin{proof}
Under the definition of uncertainty set~\eqref{u_cs} for concatenated $r^k$, the domain of each $r^k$ is compact, and problem~\eqref{Tm} is equivalent to~\eqref{Tm_minimax}. We now consider the dual form for the objective function $\sum\limits_{k=1}^{\tau}J_E(X^k, r^k)$ that relates to $r^k$. By the definition of inner product, we have
$\sum_{k=1}^{\tau}c^T(X^k) r^k = c^T_l (X) r_c,  c_l(X)=[c^T(X^1)\ \dots\ c^T(X^\tau)]^T.$
When the uncertainty set of $r_c$ is an SOC defined as~\eqref{u_cs}, problem~\eqref{Tm_minimax} is equivalent to
\begin{align}
\begin{split}
\underset{X^k, L^k}{\text{min}}\ \underset{r_c \geqslant 0}{\text{max}} \quad&\left(c_l^T(X) r_c + \sum_{k=1}^{\tau}\sum_{i} \sum_{j}X^k_{ij} W_{ij}\right)\\
\text{s.t.}\quad & r_c= \hat{r}_c +y+C^Tw,\\
                                     &\|y\|_2 \leqslant \Gamma_1^B, \|w\|_2 \leqslant \sqrt{\frac{1}{\epsilon}-1},\\
\quad&\text{constraints of~\eqref{Tm}.}
\end{split}
\label{max_rc}
\end{align}

We first consider the following minimax problem related to the uncertainty set 
\begin{align}
\begin{split}
\underset{r_c \geqslant 0}{\text{max}} \quad&c_l^T(X) r_c\\
\text{subject to}\quad & r_c= \hat{r}_c +y+C^Tw,\\
                                     &\|y\|_2 \leqslant \Gamma_1^B, \|w\|_2 \leqslant \sqrt{\frac{1}{\epsilon}-1}.
\end{split}
\label{max_rc}
\end{align}
The constraints of problem~\eqref{max_rc} have a feasible solution $r_c=\hat{r}_c$, $y=0$ and $w=0$, such that $\|y\|_2 < \Gamma_1^B, \|w\|_2 < \sqrt{\frac{1}{\epsilon}-1}$, and $c_l^T(X) r_c$ is affine of $r_c$. Hence, Slater's condition is satisfied and strong duality holds.

To get the dual form of problem~\eqref{max_rc}, we start from the following Lagrangian with $v\geqslant 0$,
$\mathcal{L}(X,r_c, z, v)
=c_l^T(X) r_c + z^T (\hat{r}_c +y+C^Tw-r_c)+v^T r_c.$
By taking the partial derivative of the above Lagrangian over $r_c$, we get the supreme value of the Lagrangian as
\begin{align*}
\sup_{r_c} \mathcal{L}(X,r_c, z, v)=&\begin{cases} z^T (\hat{r}_c +y+C^Tw)\quad \text{if}\ c_l(X) \leqslant z\\  
           \infty \quad \text{o.w.}
\end{cases}
\end{align*}
Then with the norm bound of $y$ and $w$, we have
\begin{align*}
&\sup_{\|y\|_2 \leqslant \Gamma_1^B, \|w\|_2 \leqslant \sqrt{\frac{1}{\epsilon}-1}}(z^T (\hat{r}_c +y+C^Tw))\\
=&\hat{r}^T_c z+\Gamma_1^B \|z\|_2 +\sqrt{\frac{1}{\epsilon}-1} \|Cz\|_2.
\end{align*} 
Hence, the objective function of the dual problem for~\eqref{max_rc} is
\begin{align*}
\begin{split}
&g(X, r_c, z)=\sup_{r_c \in \mathcal{U}_{\epsilon}^{CS}} \mathcal{L}(X,r_c, z)\\
=&\begin{cases} 
 \hat{r}^T_c z+\Gamma_1^B \|z\|_2 +\sqrt{\frac{1}{\epsilon}-1} \|Cz\|_2,\quad
  \text{if}\ c_l(X) \leqslant z \\
           \infty \quad \text{o.w.}.\\                                
\end{cases}
\end{split}
\end{align*}
Together with the objective function $J_D(X^k)$ and other constraints that do not directly involve $r_c$, an equivalent convex form of~\eqref{Tm} given the uncertainty set~\eqref{u_cs} is shown as~\eqref{u_cs_opt}.
\end{proof}
\end{document}